\newcommand \commentout[1] {}
\newcommand{\abs}[1]{\left\lvert#1\right\rvert}
\newcommand{\R}{\mathbb{R}}
\newtheorem{thm}{Theorem}
\newtheorem{lemma}{Lemma}
\newtheorem*{remark}{Remark}
\newtheorem{assumption}{Assumption}
\title{Asymptotic and stability analysis of kinetic models for opinion formation on networks: an Allen-Cahn approach}
\author{Martin Burger \thanks{{Helmholtz Imaging, Deutsches Elektronen-Synchrotron DESY, Notkestr. 85, 22607
Hamburg, Germany.} \\ {  Fachbereich Mathematik, Universit\"at Hamburg, Bundesstrasse
55, Hamburg, 20146, Germany}, \texttt{martin.burger@desy.de}}
, Nadia Loy\thanks{{Dipartimento di Scienze Matematica ``G.L. Lagrange'', Politecnico di Torino, Corso Duca degli Abruzzi, 24, 10129 Torino, Italy,} \texttt{nadia.loy@polito.it}}, Alex Rossi}
\date{\today}
\begin{document}

\maketitle

\begin{abstract}
We present the analysis of the stationary equilibria and their stability in case of an opinion formation process in presence of binary opposite opinions evolving according to majority-like rules on social networks. The starting point is a kinetic Boltzmann-type model derived from microscopic interactions rules for the opinion exchange among individuals holding a certain degree of connectivity. The key idea is to derive from the kinetic model an Allen-Cahn type equation for the fraction of individuals holding one of the two opinions. The latter can be studied by means of a linear stability analysis and by exploiting integral operator analysis. While this is true for ternary interactions, for binary interactions the derived equation of interest is a linear scattering equation, that can be studied by means of General Relative Entropy tools and integral operators.
\end{abstract}

\medskip

\noindent{\bf Keywords:} opinion formation, statistical connectivity, Boltzmann equation, Allen-Cahn equation, linear scattering equation, General Relative Entropy

\medskip

\noindent{\bf Mathematics Subject Classification:} 35Q20, 35Q70, 92D50.

\section{Introduction}
 Opinion dynamics has raised a relevant interest in recent years, and has been extensively studied by means of mathematical models~\cite{nguyen2020SciRep,choi2019NJP,Noonan2021PRE,peng2022,lambiotte2007PRE}.
A popular approach is to model the opinion as a binary value representing two different opposite and competing opinions. The opinion formation and evolution process is then described as a consequence of interactions between individuals. In this sense, the possible connections among individuals are of utmost interest, as people may be affected by those who are connected to them in such a way to match the opinion of the majority. This is the reason why a prominent interest is represented by opinion dynamics on complex social networks, where each individual holds a binary opinion that may change in consequence of majority-like rules.  In this framework, both the opinion dynamics and the network may lead the system to an equilibrium in the final steady state, where a certain opinion dominates the system
or coexists with the other one. 
A large number of theoretical investigations rely on the construction of graph models for complex networks~\cite{newman2002PNAS,watts1998NATURE}. 
In this direction there are works relying on physical approaches or probabilistic ones, in which from majority-like rules on graphs, equations for aggregate quantities are approximated or proposed by means of mainly euristic approaches \cite{nguyen2020SciRep,choi2019NJP,Noonan2021PRE,peng2022,lambiotte2007PRE}. In these works, the model relies on the description of the evolution of the probability of having one of the two opposite opinions in a node of a graph that can be characterized by its degree of connectivity (that in a finite graph is given by the number of connections). Such a probability is assumed to evolve according a markovian process described by a master equation. From the latter, as done for example in~\cite{Noonan2021PRE}, a Fokker-Planck equation and the corresponding stochastic differential equation are derived and, assuming a small negligible diffusion, the model becomes a system of ODEs whose linear stability is analysed. Other works, such as~\cite{peng2022}, rely on linear stability analysis of an equilibrium determined using the Perron-Frobenius theory and numerics, or fully rely on numerical simulations~\cite{nguyen2020SciRep}. In these works, the network is often modelled as a finite graph, but also as an hypergraph or as a multiplex network, in order to describe hierarchies in the connections.

Concerning the description of the network, other works rely on the characterisation of the \textit{statistical} structure of social networks~\cite{albert2002RMP,barabasi1999SCIENCE,barabasi1999PHYSA}. The reason for a statistical approach is due to the fact that the number of nodes and links of a social network is so large that a detailed description by means of classical graphs can be hardly manageable.
In the present work, we consider mesoscopic models which incorporate a statistical description of the connectivity of the individuals. In this description, the individuals are characterised by a discrete binary opinion ($\pm 1$) and by a positive \textit{degree of connectivity}.  The social network is described statistically by means of a probability density function for the degree of connectivity, as done for example in~~\cite{albi2017KRM,toscani2018PRE, calzola2023, LnMrTa}. Microscopic stochastic processes describing the interactions among individuals and leading to an exchange of the opinion given a certain distribution of the connectivity can be stated and allow to derive a corresponding \textit{kinetic model}~\cite{LnMrTa}. Such kinetic models describe the time evolution of a distribution function of the opinion and of the degree of connectivity. Kinetic models also allow to derive evolution equations for collective averaged quantities, such as the average opinion on the network. 

Specifically, we consider the Boltzmann-type kinetic model studied in~\cite{LnMrTa}. This kinetic model implements a microscopic interaction that is a \textit{Two-against-one} mechanism  (that is a kind of majority rule, in the same spirit as~\cite{slanina2003EPJB}), that is ruled by a specific \textit{interaction kernel},  while the social network is described statistically by means of a probability density function for the degree of connectivity. In \cite{LnMrTa} the authors consider one specific interaction kernel that is proportional to the connectivity of the influencing agents, and manage to derive, from the kinetic Boltzmann-type description, an evolution equation of the average opinion on the network. The latter allows to determine the stationary asymptotic equilibria and to characterise them. Here, we consider a generic interaction kernel. The aim is to determine the asymptotic states and their stability for each interaction kernel. This is not banal, specially far from a linear stability regime, for a Boltzmann-type equation. Then, our leading idea is that determining the stationary sates and their stability through the analysis of an Allen-Cahn equation is easier, even for a generic interaction kernel and at least in a linear regime, than for the Boltzmann-type equation. As a consequence, we formally derive from the kinetic model an Allen-Cahn equation for the evolution of the opinion distribution over the network and study its stationary equilibria and their stability.  %We also apply the same kind of analysis to the \textit{Ochrombel simplification} model~\cite{LnMrTa}, that leads to a linear scattering equation. 

In Section 2, we revise the kinetic model proposed in~\cite{LnMrTa} implementing the \textit{Two-against-one} intereaction, and the related essential statistical properties of the network and of the opinion on the network. Then, we derive an appropriate Allen-Cahn equation and perform a linear stability analysis. In Section 3 we consider the \textit{Ochrombel simplified model} and perform a similar analysis. While in Section 2 and 3 the network is considered to be fixed and constant in time, in Section 4, both the Two-against-one and the Ochrombel simplified model are considered on a co-evolving network, and similar results on the stationary equilibria are presented. In Section 5 we show some numerical tests, while in Section 6 we draw some conclusions.

\section{Opinion formation models and their Allen-Cahn description}
In this section, firstly we briefly revise the kinetic model for opinion formation as proposed in~\cite{LnMrTa}, then we derive from the kinetic description an Allen-Cahn equation for the opinion dynamics, and, eventually, we perform the linear stability analysis of the latter model.
\subsection{Preliminaries on the Boltzmann-type description}
Let us consider a large number of agents interacting through a social network and exchanging their opinion. The microscopic state of a generic user is defined by their opinion-connectivity pair $(w,\,c)$, where $w\in\{-1,\,1\}$ is a discrete binary variable and $c\in\R_+$ is a continuous positive one (being $c \equiv 0$ a degenerate case). The values $w=\pm 1$ conventionally denote two opposite opinions in the same spirit as the Sznajd model~\cite{sznajd-weron2000IJMP,sznajd-weron2021PHYSA}. The quantity $c$ is assumed to be a representative measure of the degree of connectivity of a given individual, namely of the number of users who may be exposed to the opinion expressed by that individual. We consider this variable to be continuous in accordance with the reference literature on the connectivity distribution of social networks, see e.g.,~\cite{barabasi1999SCIENCE,barabasi1999PHYSA,clauset2009SIREV,newman2002PNAS,watts1998NATURE}.

Let $f=f(w,c,t)$ be the kinetic distribution function of the pair $(w,\,c)$ at time $t\geq 0$. Owing to the discreteness of $w$, we may represent it as
\begin{equation}
	f(w,c,t)=p(c,t)\delta(w-1)+q(c,t)\delta(w+1),
	\label{eq:f}
\end{equation}
where $\delta(w-w_0)$ denotes the Dirac delta distribution centered at $w=w_0$ and $p,\,q\geq 0$ are coefficients which depend in general on $c$ and $t$ and represent the portion of agents with opinion $w=\pm1$, respectively, having connectivity $c$ at time $t$. Moreover, we assume $p(\cdot,t),\,q(\cdot,t)\in L^1(\R_+)$ for all $t\geq 0$. Considering a constant-in-time number of users and connections of the social network, we may impose the normalisation condition
\begin{equation}
	\int_{\R_+}\int_{\{-1,\,1\}}f(w,c,t)\,dw\,dc=1 \qquad \forall\,t\geq 0,
	\label{eq:f.normalisation}
\end{equation}
and consequently think of $f$ as the probability density of the pair $(w,c)$. The probability density distribution of the opinion is then given by the marginal
$$ h(w,t):=\int_{\R_+}f(w,c,t)\,dc=\hat{p}(t)\delta(w-1)+\hat{q}(t)\delta(w+1), $$
where we have set
$$ \hat{p}(t):=\int_{\R_+}p(c,t)\,dc, \qquad \hat{q}(t):=\int_{\R_+}q(c,t)\,dc. $$
Notice that $\hat{p}(t)$, $\hat{q}(t)$ are the probabilities that an individual expresses the opinion $w=1$ or $w=-1$, respectively, at time $t$. Consistently with~\eqref{eq:f.normalisation}, it results
\begin{equation}
	\hat{p}(t)+\hat{q}(t)=1, \qquad \forall\,t\geq 0.
	\label{eq:phat+qhat}
\end{equation}
The connectivity density of the social network is instead given by the marginal
$$ g(c,t):=\int_{\{-1,\,1\}}f(w,c,t)\,dw=p(c,t)+q(c,t). $$
In particular, $g$ will be an assigned probability density function satisfying
\begin{equation}\label{def:g}
    \int_{\R_+} g(c,t) \, dc =1, \qquad \forall t \ge 0.
\end{equation}
We also define
\begin{equation}
    m_C(t):=\int_{\R_+} c~g(c,t) dc
\end{equation}
being the average connectivity, that we assume to be finite. 

If we assume that the number of followers of a generic individual possibly varies in time more slowly than their opinion, and we actually do not specify a dynamics for the connectivity, then the marginal distribution $g$ may be well considered constant in time, i.e. $g(c,t)=g(c)$ for all $t\geq 0$. Consequently, the sum $p(c,t)+q(c,t)$ is constant in $t$ although the single terms $p(c,t)$, $q(c,t)$ may be not. Of course, when $g$ does not depend on time, then $m_C$ is a constant.

However, a priori, the probability density function $f$ may be factorized as follows
\begin{equation}\label{def:fact}
 f(w,c,t)=g(c,t)h_c(w,t) \quad \forall\,t\geq 0,   
\end{equation}
where $h_c(w,t)$ is the conditional probability density of the opinions of the agents given the degree of connectivity $c$. Assuming statistical independence of $w$ and $c$ corresponds to
\begin{equation}\label{def:indep}
    h_c(w,t)=h(w,t),
\end{equation}
i.e. $f$ is totally factorized with respect to the dependence on $w$ and $c$.

Following standard procedures, in~\cite{LnMrTa} the authors derive formally a continuous-in-time statistical description in terms of kinetic Boltzmann-type ``collisional'' equations for the distribution function $f$. Such a kinetic equation, in weak form, using an arbitrary \textit{observable quantity} that is a test function of the microscopic state $(w,c)$, i.e. $\phi=\phi(w,c):\{-1,\,1\}\times\R_+\to\R$, reads
\begin{multline}
	\frac{d}{dt}\int_{\R_+}\int_{\{-1,\,1\}}\phi(w,c)f(w,c,t)\,dw\,dc \\
	=\frac{1}{3}\int_{\R_+^3}\int_{\{-1,\,1\}^3}\bar{B}(w,c,w_\ast,c_\ast,w_{\ast\ast},c_{\ast\ast})\bigl(\phi(w',c)-\phi(w,c)\bigr) \\
		\times f(w,c,t)f(w_\ast,c_\ast,t)f(w_{\ast\ast},c_{\ast\ast},t)\,dw\,dw_\ast\,dw_{\ast\ast}\,dc\,dc_\ast\,dc_{\ast\ast},
	\label{eq:2-vs-1.Boltz}
\end{multline}
where $w'=w_{\ast\ast}$ and
where the \textit{collision kernel} $\bar{B}$ is
\begin{equation}\label{def.B}\bar{B}(w,c,w_\ast,c_\ast,w_{\ast\ast},c_{\ast\ast}):=\eta\delta(w_\ast-w_{\ast\ast})\omega(c,c_\ast,c_{\ast\ast}), 
\end{equation}
being $\eta>0$ a proportionality constant (the frequency of the interactions). % and $\chi(\cdot)$ denotes the characteristic function of the event indicated in parenthesis. 
The latter implements a majority rule, which means that the opinion changes according to the \textit{two-against-one} model, which assumes \textit{ternary} interactions in which the first individual ($w$) takes the opinion ($w'=w_{\ast\ast}$) of the second two ones ($w_\ast, w_{\ast\ast}$) if they have the same opinion (if $w_{\ast\ast}=w_\ast$); otherwise, the interaction takes place but leas to no exchange. The probability that the first individual is reached by the common opinion of the second two individuals and gets convinced by them is proportional to the generic \textit{interaction kernel} $\omega(c,c_\ast,c_{\ast\ast})$, the gives a measure of the interaction rate depending on the degree of connectivity of the individuals.
In this model, the connectivity does not change and, therefore, it is assumed that $g(c)$ is constant in time and assigned.

In~\cite{LnMrTa}, the authors choose $\omega(c,c_\ast,c_{\ast\ast})=c_\ast c_{\ast\ast}$, i.e. the probability that the first agent is reached by the opinion of the second two ones is proportional to the connectivity of the second two agents. In this case, it is possible to compute the exact evolution equation of the macroscopic average opinion 
$$
M_w:=\int_{\mathbb{R}_+}\int_{\lbrace -1,1\rbrace} f(w,c,t) \, w dc dw.
$$ 
Bu studying the evolution equation for $M_w$, in~\cite{LnMrTa} the authors show that for the specific choice $\omega(c,c_\ast,c_{\ast\ast})=c_\ast c_{\ast\ast}$, $M_w=\pm 1$ are stable stationary states, i.e. there is polarization of the social network. This corresponds to $\hat{p}=0$ or $\hat{p}=1$, respectively. Else, the vanishing average opinion $M_w=0$, that represents coexistence of the two different opinions and corresponds to $\hat{p}=\frac{1}{2}$, is unstable. 
It is also shown that if there is statistical independence, i.e. $h_c(w,t)=h(w)$ does not depend on $c$, then the average opinion has the same sign as the initial one, i.e. there is no switch in the opinion polarization. Specifically,
it is shown that
\begin{equation}
    \frac{d\hat{p}}{dt}=\dfrac{\eta}{3} m_C^2 \hat{p}(1-\hat{p})(2\hat{p}-1),
\end{equation}
which admits $\hat{p}=0,1$ as stable states and $\hat{p}=\frac{1}{2}$ as unstable ones.
\\

\subsection{Allen-Cahn description}\label{sec:allen_cahn}
We now want to derive from the Boltzmann-type description~\eqref{eq:2-vs-1.Boltz} an Allen-Cahn equation.
Let us consider the weighted distributions of the connectivity, i.e.
\begin{equation}\label{def:tildepq}
    \tilde{p}(c,t):=\dfrac{p(c,t)}{g(c,t)}, \quad \tilde{q}(c,t):=\dfrac{q(c,t)}{g(c,t)}.
\end{equation}
Specifically, we remind that, in~\eqref{eq:2-vs-1.Boltz}, $g(c)$ is time independent.
We remark that now
\begin{equation}
    \tilde{p}(c,t)+\tilde{q}(c,t)=1,
\end{equation}
i.e. $\tilde{p},\tilde{q}$ still feature their dependence on $c$, but their sum is constantly equal to one, and they may be thus seen as probabilities.
Considering the Boltzmann-equation \eqref{eq:2-vs-1.Boltz} with $\bar{B}$ defined as in \eqref{def.B} and the test function of the form $\phi(w,c)=\varphi(w)\psi(c)$, we get the $c$-strong form of the Boltzmann-equation
\begin{multline}
	\frac{d}{dt}\int_{\{-1,\,1\}}\varphi(w)f(w,c,t)\,dw 
	=\frac{\eta}{3}\int_{\R_+^2}\int_{\{-1,\,1\}^3}\omega(c,c_\ast,c_{\ast\ast})\bigl(\varphi(w')-\varphi(w)\bigr) \\
		\times f(w,c,t)f(w_{\ast\ast},c_\ast,t)f(w_{\ast\ast},c_{\ast\ast},t)\,dw\,dw_\ast\,dw_{\ast\ast}\,dc_\ast\,dc_{\ast\ast}, \qquad w'=w_{\ast\ast}.
	\label{eq:Boltz_c_strong}
\end{multline}
If we now set $\varphi(w)=\dfrac{w+1}{2}$ in \eqref{eq:Boltz_c_strong}, divide \eqref{eq:Boltz_c_strong} by $g(c)$ and augment the integrand of the right-hand side of \eqref{eq:Boltz_c_strong} by $\dfrac{g(c_\ast)g(c_{\ast\ast})}{g(c_\ast)g(c_{\ast\ast})}$, then we obtain the equation for the weighted distribution of the connectivity $\tilde{p}$  that is
\begin{equation}\label{eq:Allen_Cahn}
\partial_t \tilde{p}(c,t)=\mathcal{D}(\tilde{p})+\Gamma(c)\left[\tilde{p}(1-\tilde{p})(2\tilde{p}-1)\right],    
\end{equation}
where
\begin{multline}\label{def.D}
  \mathcal{D}(\tilde{p}):=\int_{\R_+^2}\omega(c,c_\ast,c_{\ast\ast})\left[\left(\tilde{p}(c_\ast,t)\tilde{p}(c_{\ast\ast},t)-\tilde{p}^2(c,t)\right)\left(1-\tilde{p}(c,t)\right)\right.\\
  \left.-\tilde{p}(c,t)\left((1-\tilde{p}(c_\ast,t))(1-\tilde{p}(c_{\ast\ast},t))-(1-\tilde{p}(c,t))^2\right) \right]g(c_\ast)g(c_{\ast\ast}) \, dc_\ast \, dc_{\ast\ast},  
\end{multline}
and
%being $\tilde{\omega}(c,c_\ast,c_{\ast\ast}):=\omega(c,c_\ast,c_{\ast\ast}) g(c_\ast)g(c_{\ast\ast})$
%and
\begin{equation}\label{def.Gamma}
\Gamma(c):=\int_{\R_+^2}\omega(c,c_\ast,c_{\ast\ast}) g(c_\ast)g(c_{\ast\ast})\, dc_\ast \, dc_{\ast\ast}.   
\end{equation}
It can be easily observed that $\mathcal{D}$ annihilates constants, and hence the zeros of the reaction term, that are
\begin{equation}\label{def.stationary_states}
 \tilde{p}_0=0, \qquad \tilde{p}_1=1, \qquad \tilde{p}_2=\dfrac{1}{2},   
\end{equation}
are the constant stationary solutions of~\eqref{eq:Allen_Cahn}-\eqref{def.D}-\eqref{def.Gamma}. \\
We remark that $\tilde{p}_0=0$ corresponds to a stationary state in which all the agents have opinion $-1$, while $\tilde{p}_1=1$ corresponds to a stationary state in which all the agents have opinion $+1$, i.e. $\tilde{p}_{0,1}$ depict a polarization scenario. The third stationary state $\tilde{p}_2=\dfrac{1}{2}$, implying $\tilde{q}_2=\dfrac{1}{2}$, depicts a 
 non-polarized scenario in which half of the population has opinion $+1$ and the other half $-1$, thus implying coexistence of the two opposite opinions. According to this notation we have that
 \[
 h_c(w,t)= \tilde{p}(c,t)\delta(w-1)+\tilde{q}(c,t)\delta(w+1),
 \]
and then we can link $\tilde p$ to the average opinion $m_w(c,t)$ for a given connectivity $c$, that is defined as
    $$
    m_w(c,t) \coloneqq \int_{\lbrace{-1, 1\rbrace}}w h_c(w, t)\,dw = \tilde{p}(c,t)-\tilde{q}(c,t)= g(c,t)\left(p(c,t)- q(c,t)\right),
    $$
    and remark that the total average opinion over the network is
    \begin{equation}\label{def:Mw}
    M_w(t)= \int_{\R_+} m_w(c,t) g(c,t) \, dc = \int_{\R_+} \left(p(c,t)-q(c,t)\right) \, dc.
    \end{equation}
%Going back to the previous description, as we have defined~\eqref{def:tildepq}, 
Then, if $\tilde{p}^{\infty}$ is a constant stationary state, the actual stationary state is $c$-dependent as given by $p^\infty (c)=\tilde p^\infty g^\infty(c)$ (whose existence is to be discussed later and that is simply $g(c)$ when the connectivity is fixed). 
    Therefore:
    $$m_w^\infty(c) =  (\tilde p^\infty - \tilde q^\infty) = (2\tilde p^\infty - 1),$$
    which means that the normalization function $g$ and the constant $\tilde p_\infty$ determine the average opinion.
    The latter, together with~\eqref{def:Mw}, also implies that
    \begin{equation}\label{eq:Mw-ptilde}
    M_w^\infty=(2\tilde p^\infty - 1).
    \end{equation}
    %Integrating we obtain $p^{\infty}=\tilde{p}^{\infty}=\int_{\R_+} \dfrac{p(c,0)}{g(c)} \, dc$.
    %Therefore
    %$$ m_W^{\infty}=2\tilde{p}^{\infty}-1 $$
    %Then, when you obtain a certain $\tilde{p}^{\infty}$, you get the corresponding average opinion.
%The specific case considered in~\cite{LnMrTa} may be translated here to the generic interaction kernel $\omega$, by looking at the average opinion $M_w$~\eqref{eq:Mw-ptilde}. 
We remark that~\eqref{eq:Mw-ptilde}, that holds for every $\omega$, implies that the polarized scenarios described by $M_w^\infty=\pm 1$ correspond to $\tilde p^\infty=\tilde{p}_{0,1}$ (the polarized configurations). On the other hand, the coexistence of the two different opinions, described by $M^\infty_w=0$, corresponds to $\tilde p^\infty=\tilde{p}_2$. In the specific choice considered in~\cite{LnMrTa}, it is shown that $\tilde{p}_{0,1}$ are stable, while the coexistence configuration $\tilde{p}_2$ is unstable.  Our aim, now, is to analyse the stability of the stationary states $\tilde{p}_i, \, i=0,1,2$ for a generic $\omega$, in order to see if the non-polarized scenario is always unstable and if the polarized scenarios are always stable.
\subsection{Linear stability analysis of the Allen-Cahn model}\label{sec:lin_stab}
From now on we shall drop the $\tilde{ }$ in the notation, bearing, then, in mind the fact that $p(c,t)$ is the fraction of agents having connectivity $c$ at time $t$ and it may be constant.
Let us now consider a perturbation of the stationary states, i.e.
\[
p_i+ k_i(c,t), \qquad i=0,1,2.
\]
Plugging the latter in \eqref{eq:Allen_Cahn}, it can be easily verified that
\[
\mathcal{D}(p_i+k_i) =\mathcal{D}(p_i)+\mathcal{D}'(p_i)k_i +\mathcal{O}(k_i^2),
\]
where
\begin{multline}\label{def.D_prime} \mathcal{D}'(p)k=\int_{\R_+^2}\tilde \omega(c,c_\ast,c_{\ast\ast})\left[\big(p(c_\ast,t)k(c_{\ast\ast},t) +p(c_{\ast\ast},t)k(c_\ast,t) -2p(c,t)k(c,t)\big)(1-p(c,t))\right.\\ \left.-k(c,t)\big(p(c_\ast,t)p(c_{\ast\ast},t)-p(c,t)^2\big)-k(c,t)\big((1-p(c_\ast,t))(1-p(c_{\ast\ast},t))-(1-p(c,t))^2\big)\right.\\
\left.+p(c,t)\big(k(c_\ast,t)(1-p(c_{\ast\ast},t))+(1-p(c_\ast,t))k(c_{\ast\ast},t)-2(1-p(c,t))k(c,t)\big)\right] \, dc_\ast dc_{\ast\ast},
\end{multline}
and, therefore
\begin{align*}
    p_0=0:& \qquad \partial_t k_0=\mathcal{D}'(p_0)k_0-\Gamma k_0;\\
    p_1=1:& \qquad \partial_t k_1=\mathcal{D}'(p_1)k_1-\Gamma k_1;\\
    p_2=\dfrac{1}{2}:& \qquad \partial_t k_2=\mathcal{D}'(p_2)k_2+\dfrac{\Gamma}{2} k_2.
\end{align*}
We remark that for $i=0,1$ $\mathcal{D}(p_i)=\mathcal{D}'(p_i)=0$: this implies stability of the fully polarized configurations. On the other hand, for the stationary state $p_2$ we get
\begin{equation}
    \partial_t k_2=\int_{\R_+^2}\dfrac{\omega(c,c_\ast,c_{\ast\ast})}{2}g(c_\ast)g(c_{\ast\ast})\big(k_2(c_\ast,t)+k_2(c_{\ast\ast},t)-2k_2(c,t)\big) \, dc_\ast \, dc_{\ast\ast}+\dfrac{\Gamma(c)}{2}k_2.
\end{equation}
If we now consider $k_2= e^{\lambda t} k(c)$, the latter becomes
\begin{equation}\label{def.eigenvalue_pb}
    \lambda k(c)=\int_{\R_+^2}\dfrac{\omega(c,c_\ast,c_{\ast\ast})}{2}g(c_\ast)g(c_{\ast\ast})[k(c_\ast)+k(c_{\ast\ast})]\, dc_\ast \, dc_{\ast\ast}-\Gamma(c) k(c)+\dfrac{\Gamma(c)}{2}k(c) .
\end{equation}
Hence, we consider the following
\begin{assumption}\label{ass.1}
The kernel $\omega > 0$ is symmetric with respect to the connectivity of the influencing individuals, i.e. $$\omega(c,c_\ast,c_{\ast\ast})=\omega(c,c_{\ast\ast},c_\ast).$$
\end{assumption} 
The latter means that there is, reasonably, symmetry with respect to the two influencing agents.
As a consequence~\eqref{def.eigenvalue_pb} becomes
\begin{equation}\label{def.eigenvalue_pb2}
    \lambda k(c)=\int_{\R_+^2}\omega(c,c_\ast,c_{\ast\ast})g(c_\ast)g(c_{\ast\ast})k(c_\ast)\, dc_\ast \, dc_{\ast\ast}-\dfrac{\Gamma(c)}{2}k(c),
\end{equation} 
that reduces to
\begin{equation}\label{eq:eigenvalue_gen}
    k(c)=\dfrac{\int_{\R_+^2} \omega(c,c_\ast,c_{\ast\ast}) g(c_\ast)g(c_{\ast\ast}) k(c_\ast)\, dc_\ast \, dc_{\ast\ast}}{\lambda+\Gamma(c)/2}.
\end{equation}
 
As our guess is that the coexistence configuration $\tilde{p}_2$ is always unstable, we want to show the existence of a positive eigenvalue $\lambda$ by studying~\eqref{eq:eigenvalue_gen}. We shall do this by exploiting an indirect argument. First of all, we introduce the measure space 
$$
L^1(\R_+,\mu), \qquad d\mu(c)= g(c) dc,
$$
that is a finite space measure as $g$ is a probability density function on $\R_+$ (see~\eqref{def:g}) and we denote it $L^1_\mu$. We consider the space $L^1(\R_+,\mu)$ to be endowed with its natural $L^1$ norm 
\begin{equation}
   \|\phi\|_{L^1(\R_+,\mu) } \coloneqq \int_{\R_+} \phi(c) g(c) \, dc.
\end{equation}
We remark that, as $p$ defined in~\eqref{eq:f} is in $L^1(\R_+)$, then $\tilde{p}$ defined in~\eqref{def:tildepq} naturally belongs to $L^1(\R_+,\mu)$, and the same holds for the perturbation $k$, that is actually a perturbation of $\tilde{p}$.
Specifically we have that
\[
\|\tilde{p}\|_{L^1(\R_+,\mu) }=\|p\|_{L^1(\R_+)}.
\]
We also make the further, reasonable, assumption on $\omega$, namely
\begin{assumption}\label{ass.2}
    The kernel $\omega > 0$ and the distribution of the connectivities $g$ are defined in such a way to satisfy 
    \begin{equation}\label{propr.1}
    \omega \in L^\infty(\R_+\times \R_+\times\R_+, \mu \times \mu \times \mu). 
\end{equation}
\end{assumption}
We now introduce
\begin{equation}\label{def_G}
    G(c,c_\ast):=\int_{\R_+} \omega(c,c_\ast,c_{\ast\ast}),  \, d\mu(c_{\ast\ast}) 
\end{equation}
and we remark that~\eqref{propr.1} implies that
\begin{equation}\label{G_Linfty}
    G\in L^\infty(\R_+\times \R_+, \mu\times \mu),
\end{equation}
and also
\begin{equation}\label{propr.2}
    \Gamma \in L^\infty(\R_+, \mu).
\end{equation}
Moreover, as $\omega > 0$, both $G$ and $\Gamma$ are positive.

\noindent Next, we introduce the operator
%\begin{equation}
    \begin{align}
      A_\lambda : L^1(\R_+,\mu) &\longmapsto   L^1(\R_+,\mu)\\
       k &\longmapsto A_\lambda k(c) := \int_{\R_+} G_\lambda (c,c_\ast) k(c_\ast)  \, d\mu(c_\ast), 
    \end{align}
%\end{equation}
where
\begin{equation}\label{def_Gl}
  G_\lambda(c,c_\ast)= \dfrac{G(c,c_\ast)}{\lambda+\Gamma(c)/2}.  
\end{equation}
We remark that, as $\Gamma>0$, then for each $\lambda$ we have that $\lambda +\Gamma/2>\lambda$, and therefore $G_\lambda <\dfrac{G}{\lambda}$, so that
\begin{equation}\label{propr_Gl_infty}
    G_\lambda \in L^\infty(\R_+\times \R_+, \mu\times \mu),
\end{equation}
as~\eqref{G_Linfty} holds. Then, for the Holder inequality,  $A_\lambda k(c) \in L^1(\R_+,\mu)$ and the operator $A_\lambda$ is well defined. Specifically, because of~\eqref{propr_Gl_infty}, $A_\lambda$ is a bounded operator and for the embedding of finite measure spaces, then we also have that
\begin{equation}\label{propr_Gl}
    G_\lambda \in L^2(\R_+\times\R_+, \mu\times \mu),
\end{equation}
then  $A_\lambda$ is compact~\cite{Krasno}.

Now, we study for fixed $\lambda$ the eigenvalue problem
$$  A_\lambda k(c)= \delta k(c)  $$
and we want to show that there exists $\lambda>0$ for the eigenvalue $\delta=1$ with a positive eigenfunction. This corresponds to prove that there exists $\lambda>0$ for \eqref{eq:eigenvalue_gen}.
We call $\mathcal{K} \subset L^1(\R_+,\mu)$ the cone of nonnegative functions, that is reproducing.  
For $\lambda >0$ 
%$$
%\lambda > -\max_{\R_+} \dfrac{\Gamma(c)}{2}
%$$
$A_\lambda$ is a strongly positive operator on $\mathcal{K}$.
Moreover, as a consequence of~\eqref{propr_Gl_infty}, that is granted by Assumption \ref{ass.2}, %~\eqref{propr.1}-\eqref{propr.2},for $\lambda>0$, 
we have that
\begin{equation}
    \dfrac{\Gamma(c)}{\lambda +\dfrac{\Gamma(c)}{2}} \in L^\infty(\R_+, \mu).
\end{equation}
Therefore, we can apply the Krein-Rutman theorem and, then, for each $\lambda>0$, we have that there exists $\delta=\delta(\lambda)>0$, that is also the spectral radius of $A_\lambda$, associated to a positive eigenfunction that is the only positive eigenfunction. 
We can remark that for $k \equiv 1$ , $\delta(\lambda)=1$ is an eigenvalue-eigenfunction couple if
\begin{equation}\label{eq:lambda_norm}
    \int_{\R_+} \dfrac{\Gamma(c)}{\lambda +\dfrac{\Gamma(c)}{2}} g(c) \, dc=1.
\end{equation}
The latter is not verified if $\lambda <0$. In fact, let us suppose that there exists $\lambda <0$ satisfying~\eqref{eq:lambda_norm}. Then $\lambda+\Gamma(c)/2 < \Gamma (c)/2$. Therefore $\dfrac{\Gamma(c)/2 }{\lambda+\frac{\Gamma(c)}{2}}>1$, and then $\int_{\R_+} \dfrac{\frac{\Gamma(c)}{2} g(c)}{\lambda+\frac{\Gamma(c)}{2}} \, dc > \int_{\R_+} g(c) \, dc =1$, which is a contradiction. 
We also remark that, if $\omega$ does not depend on $c$, we find
    \[
     1=\dfrac{\Gamma}{\lambda+\Gamma/2},
    \]
   implying
    \[
     \lambda=\dfrac{\Gamma}{2}.
    \]
Specifically, when $\omega(c,c_\ast,c_{\ast\ast})=c_\ast c_{\ast\ast}$ as in~\cite{LnMrTa},  then $\Gamma=m_C^2$, so that 
\[
\lambda=m_C^2/2>0.
\]
%and 
%\begin{equation}
%    \lambda= \int_{\R_+} \dfrac{\Gamma(c)}{2}  g(c) \, dc
%\end{equation}
%$\delta(\lambda)=1$ is an eigenvalue. 
In conclusion, we have found a positive $\lambda$ and, then, $p_2=\frac{1}{2}$ is unstable.

We can summarize these observations in one theorem.
\begin{thm} \label{thm_pos}
    Let $\omega$ satisfy Assumptions~\ref{ass.1}-\ref{ass.2}.
    %Let the kernel $G_\lambda$ be defined by \eqref{def_Gl}. 
    Then the operator $\mathcal D'\left(\frac 12\right) + \frac \Gamma 2 Id$ has a positive eigenvalue and this means that the stationary state $\tilde{p}_2 = \frac 12$ of equation \eqref{eq:Allen_Cahn} is linearly unstable.
\end{thm}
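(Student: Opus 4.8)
The plan is to read the claim through the operator $\mathcal L:=\mathcal D'\!\left(\tfrac12\right)+\tfrac\Gamma2\,\mathrm{Id}$, which under Assumption~\ref{ass.1} acts (cf.~\eqref{def.eigenvalue_pb2}) as $\mathcal L k(c)=\int_{\R_+}G(c,c_\ast)k(c_\ast)\,d\mu(c_\ast)-\tfrac{\Gamma(c)}{2}k(c)=:Bk(c)-\tfrac{\Gamma}{2}k(c)$, with $B$ a compact, strongly positive integral operator (its kernel $G$ is in $L^2(\mu\times\mu)$ by~\eqref{propr_Gl} and is strictly positive). Finding a \emph{positive} eigenvalue of $\mathcal L$ is, by the rearrangement leading to~\eqref{eq:eigenvalue_gen}, exactly the same as producing $\lambda>0$ for which the parametrised operator $A_\lambda=(\lambda+\Gamma/2)^{-1}B$ has $1$ as an eigenvalue with a positive eigenfunction: indeed $A_\lambda k=k$ is equivalent to $Bk-\tfrac\Gamma2 k=\lambda k$. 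Since by Krein--Rutman the eigenvalue of $A_\lambda$ carrying a positive eigenfunction is precisely its spectral radius $\delta(\lambda)$, the whole theorem reduces to exhibiting some $\lambda^\ast>0$ with $\delta(\lambda^\ast)=1$; the corresponding $k>0$ then gives a growing mode $e^{\lambda^\ast t}k$ of the linearised equation, i.e.\ linear instability of $\tilde p_2=\tfrac12$.

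First I would fix $\lambda>0$ and use what is already established: $A_\lambda$ is compact and strongly positive on the reproducing cone $\mathcal K\subset L^1_\mu$, so Krein--Rutman furnishes a simple principal eigenvalue $\delta(\lambda)>0$ with a unique positive eigenfunction strictly dominating the rest of the spectrum. Then I would treat $\delta$ as a function of the parameter and exploit two monotonicity features of $G_\lambda=G/(\lambda+\Gamma/2)$: it is continuous and strictly decreasing in $\lambda$ pointwise, and $\|G_\lambda\|_\infty\le\|G\|_\infty/\lambda\to0$. Standard continuity and monotonicity results for the spectral radius of continuously varying, positive compact operators then yield that $\lambda\mapsto\delta(\lambda)$ is continuous and strictly decreasing with $\delta(\lambda)\to0$ as $\lambda\to\infty$. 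It thus remains only to secure $\delta(\lambda)>1$ for small $\lambda>0$, after which the intermediate value theorem produces the crossing $\delta(\lambda^\ast)=1$ at some $\lambda^\ast>0$.

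The engine of instability is the clean action of $\mathcal L$ on the constant $1$. Using $\int_{\R_+}G(c,c_\ast)\,d\mu(c_\ast)=\Gamma(c)$ one finds $\mathcal L 1=\Gamma-\tfrac\Gamma2=\tfrac\Gamma2>0$, equivalently $A_\lambda 1(c)=\Gamma(c)/(\lambda+\Gamma(c)/2)$, whose $\mu$-average is the left-hand side of~\eqref{eq:lambda_norm}, namely $F(\lambda):=\int_{\R_+}\Gamma(c)/(\lambda+\Gamma(c)/2)\,g(c)\,dc$. At $\lambda=0$ the integrand collapses to the constant $2$, so $F(0)=2>1$ and $A_0 1=2\cdot 1$ exhibits $1$ as a positive eigenfunction with eigenvalue $2$, strongly suggesting $\delta(0)=2$; passing to the limit $\lambda\to0^+$ by monotonicity then gives $\delta(\lambda)>1$ for small $\lambda>0$, closing the argument. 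That the crossing genuinely sits at positive $\lambda$ is confirmed by the contradiction already noted: for $\lambda<0$ one has $0<\lambda+\Gamma(c)/2<\Gamma(c)/2$, so each integrand exceeds $2$ and $F(\lambda)>2>1$, ruling out~\eqref{eq:lambda_norm}. As a consistency check, when $\omega$ is independent of $c$ the operator fixes constants and $\delta(\lambda)=F(\lambda)=\Gamma/(\lambda+\Gamma/2)$ exactly, so the crossing is explicit at $\lambda^\ast=\Gamma/2>0$, recovering $\lambda^\ast=m_C^2/2$ for the kernel of~\cite{LnMrTa}.

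I expect the main obstacle to be precisely the small-$\lambda$ endpoint in the general, $c$-dependent case. The identity $\mathcal L 1=\Gamma/2>0$ makes a positive principal eigenvalue morally inevitable, and the generalised Collatz--Wielandt bound gives $\delta$-type lower estimates of the form $\inf_c\Gamma(c)/2$; this is strictly positive, and the whole argument is transparent, whenever $\Gamma$ is bounded away from $0$. The delicate point is the degenerate regime $\inf_c\Gamma(c)=0$, where $A_0$ may fail to be bounded on $L^1_\mu$ and the pointwise lower bound $\delta(\lambda)\ge\inf_c A_\lambda 1(c)$ collapses, so that upgrading $\delta(0)\ge 1$ to strict $\delta(\lambda)>1$ for small $\lambda>0$ requires genuine care. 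The safe route is to remain at $\lambda>0$ throughout, justify continuity and strict monotonicity of the principal eigenvalue in $\lambda$ via positive-operator perturbation theory, and work with the $\mu$-averaged quantity $F(\lambda)$ together with the monotone increase of $\delta(\lambda)$ as $\lambda\searrow0$ rather than with pointwise bounds; the constant-$\omega$ case, where everything is exact, is the right sanity check to keep in view.
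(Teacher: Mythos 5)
Your proposal is correct in its overall architecture and coincides with the paper's setup: both reduce the linearisation around $\tilde p_2=\frac12$ to the fixed--point problem $A_\lambda k=k$ of \eqref{eq:eigenvalue_gen}, establish boundedness and compactness of $A_\lambda$ on $L^1(\R_+,\mu)$ from Assumption~\ref{ass.2}, and invoke Krein--Rutman to identify the principal eigenvalue $\delta(\lambda)$ with a positive eigenfunction. Where you genuinely diverge is in the final, decisive step of locating a $\lambda^\ast>0$ with $\delta(\lambda^\ast)=1$. The paper tests the constant function, observes that $A_\lambda 1(c)=\Gamma(c)/(\lambda+\Gamma(c)/2)$, writes the scalar normalisation condition \eqref{eq:lambda_norm}, and shows by contradiction that it cannot hold for $\lambda<0$; this pins down the sign of $\lambda$ but is exact as an eigenvalue equation only when $\omega$ does not depend on its first argument (the case the paper then computes explicitly, $\lambda=\Gamma/2$). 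You instead run a continuity--monotonicity--intermediate-value argument on $\lambda\mapsto\delta(\lambda)$, anchored at the clean identity $A_0 1=2\cdot 1$ (hence $\delta(0)=2$) and at $\delta(\lambda)\to 0$ as $\lambda\to\infty$; this actually yields \emph{existence} of the crossing for general $c$-dependent kernels, which the paper's averaged condition does not quite deliver, at the price of having to justify continuity and strict monotonicity of the principal eigenvalue of a compact positive family. Your closing caveat is well placed: when $\inf_c\Gamma(c)=0$ the operator $A_0$ need not be bounded and the pointwise Collatz--Wielandt bound degenerates, so the limit $\lambda\searrow 0$ must be handled with care; note that the paper's own argument is equally exposed at this point, since Assumption~\ref{ass.2} does not bound $\Gamma$ away from zero. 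In short: same framework, but your route to the positive eigenvalue is more complete for general $\omega$, while the paper's is shorter and fully explicit only in the $c$-independent case.
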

This allows to conclude that the stationary kinetic distribution of the opinions is the Dirac delta centered in $1$ or $-1$, i.e. $h^\infty(w)=\delta(w\pm 1)$.

\begin{remark}
We remark that Assumption~\ref{ass.2} that concerns the regularity of $\omega$ in the finite measure space $L_1(\R_+^3,\mu^3)$, that is defined by $g$, depends on the interplay of $\omega$ and $g$.
Specifically, we remark that kernels in the form $\omega(c,c_\ast,c_{\ast\ast})=c c_\ast c_{\ast\ast}$ or $\omega(c,c_\ast,c_{\ast\ast})= c_\ast c_{\ast\ast}$ are positive as $c, c_\ast,c_{\ast\ast} >0$. 
%In order to consider this kind of interaction kernel to be in $L^\infty$ we can consider the following assumption
%\begin{assumption}\label{ass.0}
%    The moments of all orders of $g$ are finite and
%    \[
%    \sup_{p>0} || c^p g||_{L^1(\R_+)} < \infty.
%    \]
%\end{assumption}
%In fact, if $g$ satisfies Assumption~\ref{ass.0} and we are in a finite measure space, then, for example in the case $\omega(c,c_\ast,c_{\ast\ast})=c_\ast c_{\ast\ast} \in L^\infty (\R_+\times\R_+,\mu\times \mu)$ as $\displaystyle\sup_{p>0} ||c_\ast c_{\ast\ast}||_{L^p(\R_+\times\R_+,\mu\times \mu)}=\displaystyle\sup_{p>0}||c ||^2_{L^p(\R_+,\mu)}=\displaystyle\sup_{p>0} \left(||c^p g(c)||_{L^1(\R_+)}\right)$. In fact, if a function has uniformly bounded $L^p$ norms in a finite measure space, then it is also $L^\infty$.
Moreover, they may be seen as a particular case of kernels in the form $\omega(c, c_\ast, c_{\ast\ast}) = \omega_1(c)\omega_2(c_\ast)\omega_3(c_{\ast\ast})$ with $\omega_i$ functions with certain growth (and for the symmetry assumption~\ref{ass.1} most probably $\omega_2 = \omega_3$), then:
\begin{align*}
    k(c) = \omega_1(c)\frac{\left(\int_{\R_+}\omega_2(c_\ast)k(c_\ast)\,d\mu(c_\ast)\right)\left(\int_{\R_+}\omega_3(c_{\ast\ast})\,d\mu(c_{\ast\ast})\right)}{\lambda + \frac 12\omega_1(c)\left(\int_{\R_+}\omega_2(c_\ast)\,d\mu(c_\ast)\right)\left(\int_{\R_+}\omega_3(c_{\ast\ast})\,d\mu(c_{\ast\ast})\right)}.
\end{align*}
If $\lambda < 0$, then $\lambda + \Gamma(c)/2 < \Gamma(c)$ and hence:
$$k(c) > \int_{\R_+}\frac{\omega_2(c_\ast)}{\int_{\R_+}\omega_2(c_\ast)\,d\mu(c_\ast)}k(c_\ast)\,d\mu(c_\ast)$$
and we can introduce the probability measure $\nu \coloneqq \frac{\omega_2(c_\ast)}{\int_{\R_+}\omega_2(c_\ast)\,d\mu(c_\ast)}\mu$, which is absolutely continuous with respect to $\mu$, then $k(c) > \mathbb E_\nu[k]$ for every $c \in \R_+$, where $\mathbb E_\nu$ is the expected value with respect to $\nu$. But this not possible.
\end{remark}

\section{Linear problems}
In this section we apply the methodology illustrated in the previous section to some kinetic models in opinion dynamics implementing binary interactions. As the starting point is a binary interaction, the kinetic model will not be transformed into an Allen-Cahn equation, but into a linear scattering equation. By analysing a suitable eigenvalue-eigenfunction problem we shall study the asymptotic states. %Specifically, we study the stationary state of the opinion dynamics by means of an Allen-Cahn equation.
\subsection{The Ochrombel-type simplified model}
With the same notations of the previous section, we present a further model exposed in \cite{LnMrTa}, that is the \textit{Ochrombel simplified model}, which involves binary interactions. In this model, two identical opinions are not necessary to convince an individual, but the probability that the first individual is convinced by the opinion of the second one is proportional to the connectivity of the latter and the duration of the interaction. In terms of Boltzmann-type equations, these assumptions read:
\begin{multline} \label{och_eq}
    \frac{d}{dt}\int_{\R_+}\int_{\{-1, 1\}}\phi(w, c)f(w, c, t)\,dw\,dc  \\
    = \frac 12\int_{\R_+^2}\int_{\{-1, 1\}^2}B(c, c_\ast)( \phi(w, c_\ast) - \phi(w_\ast, c_\ast))f(w, c, t)f(w_\ast, c_\ast, t)\,dw\,dw_\ast\,dc\,dc_\ast.
\end{multline}
In \cite{LnMrTa}, the choice $B(c, c_\ast) = \eta c$ is considered. Again, by deriving a closed form for the evolution equation for the total average, the authors in \cite{LnMrTa} show that the stationary average opinion may be any value in $[-1,1]$. We now want to see if the same result holds for a generic $B$. Then, again choosing $\phi(w, c) = \varphi(w)\psi(c)$ and performing the change of variables $(c, w) \mapsto (c_\ast, w_\ast)$ at the right hand side, we have the $c$-strong form for \eqref{och_eq}:
\begin{multline*} \label{och_eq_2}
    \frac{d}{dt}\int_{\{-1, 1\}}\varphi(w)f(w, c, t)\,dw \\
    = \frac 12\int_{\R_+}\int_{\{-1, 1\}^2}B(c, c_\ast)(\varphi(w_\ast) - \varphi(w))f(w, c, t)f(w_\ast, c_\ast, t)\,dw\,dw_\ast\,dc_\ast.
\end{multline*}
If $\varphi(w) = \frac{w + 1}{2}$, the evolution equation for $\tilde{p}, \tilde{q}$ defined in~\eqref{def:tildepq} is
$$\partial_t \tilde{p}(c, t) = \frac 12 \int_{\R_+}B(c, c_\ast)(\tilde{p}(c_\ast,t)\tilde{q}(c,t) - \tilde{q}(c_\ast,t)\tilde{p}(c,t))\,dc_\ast.$$
Repeating the same argument as before we obtain
\begin{align*}
    \partial_t \tilde p(c, t) &= \frac 12\int_{\R_+} B(c, c_\ast)(\tilde p(c_\ast,t)(1 - \tilde p(c,t)) - (1 - \tilde p(c_\ast,t))\tilde p(c,t)) g(c_\ast)\,dc_\ast \\
    &= \frac 12\int_{\R_+}B(c, c_\ast)(\tilde p(c_\ast,t) - \tilde p(c,t))g(c_\ast)\,dc_\ast,
\end{align*}
and, if we define 
\begin{equation}\label{def:GammaB}
\Gamma_B(c) \coloneqq \int_{\R_+}B(c, c_\ast)g(c_\ast)\,dc_\ast,    
\end{equation} 
we get%, dropping the $\tilde{ }$
\begin{equation} \label{lin_eq}
    \partial_t \tilde{p}(c, t) = \frac 12\left[\int_{\R_+} B(c, c_\ast)\tilde{p}(c_\ast,t) g(c_\ast)\,dc_\ast - \Gamma_B(c)\tilde{p}(c,t)\right],
\end{equation}
that is a linear scattering equation (or linear Boltzmann equation).
We make the following reasonable assumption.
\begin{assumption}\label{ass.3}
The kernel $B>0 $ and the distribution $g$ are defined in such a way to satisfy
\begin{equation}
B \in L^\infty(\R_+\times \R_+,\mu\times \mu). 
\end{equation}   
\end{assumption}
As a consequence, $\Gamma_B \in L^\infty (\R_+,\mu)$ and $\Gamma_B > 0$. 

Typically, we introduce
\begin{align}
      \mathcal{T}_B : L^1(\R_+,\mu) &\longmapsto   L^1(\R_+,\mu)\\
       k &\longmapsto \mathcal{T}_B k(c) := \int_{\R_+} B (c,c_\ast) k(c_\ast)  \, d\mu(c_\ast),
    \end{align}
    that is well defined, bounded and compact. The operator $\mathcal{T}_B$ allows to rewrite \eqref{lin_eq} as
\begin{equation}\label{lin_eq_2}
\partial_t \tilde{p}=\mathcal{T}_B\tilde{p}-\Gamma_B \tilde{p}.
\end{equation}
We shall denote by $\mathcal{T}_B^*$ the adjoint.
As we want to determine the stationary state of \eqref{lin_eq_2},
we introduce $n$ and $m$ the eigenfunctions of the stationary equation and of its adjoint, i.e. respectively
\begin{equation}
    \Gamma_B(c) n(c) =\mathcal{T}_B n(c),
\end{equation}
\begin{equation}
    \Gamma_B(c) m(c) =\mathcal{T}^*_B m(c).
\end{equation}
In the present case, because of the definition of $\Gamma_B$, we have that the adjoint $\mathcal{T}^*_B$ satisfies micro-reversibility, i.e.
\[
B(c,c_\ast)g(c_\ast) m(c_\ast)=B(c_\ast,c)g(c) m(c),
\]
and it is then conservative.
On the other hand, the eigenfunction of the operator $\mathcal{T}_B$ is 
\[
n=1,
\]
up to multiplicative constants (any constant is an eigenfunction). 
Specifically, we may argue that the operator 
\[
\mathcal{L}_B=\dfrac{1}{\Gamma_B}\mathcal{T}_B
\]
is linear and it is bounded. As $T_B\in L^\infty(\R_+\times \R_+,\mu\times \mu) \subset L^2(\R_+\times \R_+,\mu\times \mu)$, then the operator $\mathcal{L}_B$ is also compact.
Hence we can apply the Krein-Rutman theorem, and, therefore, there exists a positive eigenvalue (that is the spectral radius) with a positive eigenfunction that is the unique positive one (up to a multiplicative constant). We remark that $n\equiv 1$, $\delta=1$ is an eigenvalue-eigenfunction couple and, in particular, all the functions belonging to $\langle 1 \rangle\subset L^1(\R_+,\mu)$ are eigenfunctions relative to the unitary eigenvalue. The same argument can be applied to $\mathcal{T}_B^*$ and $m$ is a positive eigenfunction unique up to multiplicative constants.

Then $\tilde{p}^\infty \in \langle 1\rangle$, and thus it is a constant.
In order to detect the exact stationary state and the decay to equilibrium, we use classical arguments of General Relative Entropy~\cite{benoit}.
Firstly, in order to uniquely define the eigenfunctions, we normalize them by setting
\begin{equation}\label{eig.norm}
    \int_{\R_+} n(c) g(c) \, dc=1, \qquad \int_{\R_+} n(c) m(c) g(c) \, dc=1.
\end{equation}
Specifically, as $g$ satisfies~\eqref{def:g}, the latter implies
\[
n=1.
\]
We can state the following general result
\begin{lemma}\label{lemma1}
Let us consider Assumption~\ref{ass.3}, the normalization conditions~\eqref{eig.norm} and solutions $h(c,t)$ to~\eqref{lin_eq}, and let $H$ be a convex functional. Then we  have that
\begin{multline}\label{lemma.gen.entr}
    \dfrac{d}{dt}\int_{\R_+} m(c) n(c) H\left(\dfrac{h(c,t)}{n(c)}\right) \, g(c) \, dc\\
    =\int_{\R_+^2}B(c,c_\ast) m(c) n(c_\ast) \left[ H\left(\dfrac{h(c,t)}{n(c)}\right)-H\left(\dfrac{h(c_\ast,t)}{n(c_\ast)}\right)\right.\\
    \left.+H'\left(\dfrac{h(c_\ast,t)}{n(c_\ast)}\right)\left(\dfrac{h(c_\ast,t)}{n(c_\ast)}-\dfrac{h(c,t)}{n(c)} \right)\right] \, g(c_\ast) dc_\ast g(c) dc 
    \le 0.
\end{multline}
In particular, we have that
\begin{equation}\label{lemma.cons}
    \rho:=\int_{\R_+} m(c) h(c,t) \, g(c) dc=\int_{\R_+} m(c) h(c,0) \, g(c) dc
\end{equation}
is constant in time. Moreover,  there exists a constant $\nu_3>0$ such that
\begin{equation}\label{eq:gen.entr}
\int_{\R_+} m(c) n(c) \left(\dfrac{h(c,t)-\rho n(c)}{n(c)}\right)^2 \, g(c) dc \le \int_{\R_+} m(c) n(c) \left(\dfrac{h(c,0)-\rho n(c)}{n(c)}\right)^2 \, g(c) dc \exp^{-\nu_3 t}
  \end{equation}
\end{lemma}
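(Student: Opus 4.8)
The plan is to run the General Relative Entropy (GRE) machinery in the weighted space $L^1(\R_+,\mu)$, using the two eigenpairs already introduced: the primal $n$ with $\Gamma_B n=\mathcal T_B n$ (here $n\equiv 1$) and the dual $m$ with $\Gamma_B m=\mathcal T_B^\ast m$, normalised by \eqref{eig.norm}. Writing the equation as $\partial_t h=\mathcal T_B h-\Gamma_B h$ and setting $u:=h/n$, the loss term can be absorbed using the primal relation $\Gamma_B(c)n(c)=\int_{\R_+}B(c,c_\ast)n(c_\ast)\,d\mu(c_\ast)$, so that $\partial_t u(c)=\tfrac1{n(c)}\int_{\R_+}B(c,c_\ast)n(c_\ast)\bigl(u(c_\ast)-u(c)\bigr)\,d\mu(c_\ast)$. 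Differentiating $\mathcal H(t):=\int_{\R_+}m\,n\,H(u)\,d\mu$ under the integral and inserting this expression gives
\[
\frac{d}{dt}\mathcal H=\int_{\R_+^2}B(c,c_\ast)\,m(c)\,n(c_\ast)\,H'(u(c))\bigl(u(c_\ast)-u(c)\bigr)\,d\mu(c_\ast)\,d\mu(c).
\]
To reveal the sign I would add the null contribution $\int_{\R_+^2}B\,m(c)n(c_\ast)\bigl(H(u(c))-H(u(c_\ast))\bigr)\,d\mu\,d\mu$: integrating its two halves separately and applying the primal relation to one and the dual relation $\int_{\R_+}B(c,c_\ast)m(c)\,d\mu(c)=\Gamma_B(c_\ast)m(c_\ast)$ to the other, both reduce to $\int_{\R_+}H(u)\,\Gamma_B\,m\,n\,d\mu$, so the added term vanishes. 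This recasts $\tfrac{d}{dt}\mathcal H$ exactly as the integral in \eqref{lemma.gen.entr}, whose integrand is the tangent-line defect of the convex $H$ and is therefore pointwise non-positive, yielding $\tfrac{d}{dt}\mathcal H\le0$.

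For the conserved quantity \eqref{lemma.cons} I would simply test the equation against $m$: since $\mathcal T_B^\ast m=\Gamma_B m$,
\[
\frac{d}{dt}\int_{\R_+}m\,h\,d\mu=\langle \mathcal T_B h-\Gamma_B h,\,m\rangle_\mu=\langle h,\,\mathcal T_B^\ast m-\Gamma_B m\rangle_\mu=0,
\]
so $\rho=\int_{\R_+}m\,h\,d\mu$ is constant; equivalently this is the $H(x)=x$ specialisation of the identity above, for which the dissipation vanishes identically.

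To obtain the quantitative decay \eqref{eq:gen.entr} I would take the quadratic entropy $H(x)=(x-\rho)^2$. With this choice the tangent-line defect collapses to a square, giving
\[
\frac{d}{dt}\int_{\R_+}m\,n\,(u-\rho)^2\,d\mu=-\int_{\R_+^2}B(c,c_\ast)\,m(c)\,n(c_\ast)\bigl(u(c_\ast)-u(c)\bigr)^2\,d\mu(c_\ast)\,d\mu(c)=:-\mathcal D(t).
\]
Because $\int_{\R_+}m\,n\,d\mu=1$ and $\rho=\int_{\R_+}m\,n\,u\,d\mu$, the left-hand functional is the variance of $u$ under the probability measure $m\,n\,d\mu$ and coincides with the quantity in \eqref{eq:gen.entr} (using $u-\rho=(h-\rho n)/n$). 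Hence \eqref{eq:gen.entr} follows from Gronwall's lemma as soon as one proves a Poincaré-type spectral-gap estimate $\mathcal D(t)\ge\nu_3\,\mathcal H(t)$ for some $\nu_3>0$.

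I expect this last estimate to be the crux. The idea is to symmetrise the Dirichlet form $\mathcal D$ via the micro-reversibility relation $B(c,c_\ast)g(c_\ast)m(c_\ast)=B(c_\ast,c)g(c)m(c)$, turning it into the Dirichlet form of a reversible operator self-adjoint in $L^2(m\,n\,d\mu)$. The compactness of $\mathcal T_B$ (equivalently of $\mathcal L_B=\Gamma_B^{-1}\mathcal T_B$), already established, makes the spectrum discrete, while Krein–Rutman ensures the leading eigenvalue is simple with eigenspace $\langle n\rangle$ and strictly dominates the rest of the spectrum; the gap to the second eigenvalue furnishes $\nu_3$. On the codimension-one invariant subspace $\{\int_{\R_+}m\,\cdot\,d\mu=0\}$, where the perturbation $h-\rho n$ lives, this gap gives precisely $\mathcal D\ge\nu_3\mathcal H$, with strict positivity of $B$ guaranteeing irreducibility so that the only zero-dissipation states are multiples of $n$. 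The main difficulty is making $\nu_3>0$ quantitative: Assumption \ref{ass.3} only bounds $B$ from above, so either the discrete-spectrum/Krein–Rutman argument must be leveraged to conclude a strictly positive gap, or a lower bound on $B$ has to be invoked; identifying the precise value of $\nu_3$ is the part I would treat most carefully.
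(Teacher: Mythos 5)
Your derivation of the entropy identity \eqref{lemma.gen.entr} and of the conservation law \eqref{lemma.cons} coincides with the paper's: the paper verifies a pointwise identity for $\partial_t\bigl(m\,n\,H(h/n)\bigr)$ plus a transport term, identifies the right-hand side with the tangent-line defect of the convex $H$, integrates against $g\,dc$, and then specialises to $H(u)=u$ and to $H(u)=u^2$ applied to the recentred function $h-\rho n$, exactly as you propose. The divergence is in the final Poincar\'e step. Where you invoke compactness of $\mathcal L_B$ and Krein--Rutman to extract a spectral gap, the paper proves the inequality $\mathcal D\ge\nu_3\,\mathcal H$ directly by Cauchy--Schwarz: it introduces an auxiliary weight $\psi>0$ subject to the two conditions
\[
\nu_1=\int_{\R_+} n(c)\,m^2(c)/\psi(c)\,g(c)\,dc<\infty,
\qquad
\nu_2\,\psi(c_\ast)\,n(c)\le B(c,c_\ast),
\]
uses $\int_{\R_+}m\,h\,g\,dc=0$ to insert the double integral, and sets $\nu_3=(\nu_1\nu_2)^{-1}$ with the concrete choice $\psi(c_\ast)=\min_c B(c,c_\ast)$. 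This is more elementary and yields an explicit constant, at the price of an implicit pointwise lower bound on $B$ (the choice of $\psi$ is only meaningful when $\min_c B(\cdot,c_\ast)>0$ and $\nu_1<\infty$).

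Your route, as written, has a genuine gap at precisely the point you flag. A spectral gap of the compact operator $\mathcal L_B=\Gamma_B^{-1}\mathcal T_B$ below its leading eigenvalue $1$ does not by itself give $\mathcal D\ge\nu_3\,\mathcal H$: the dissipation $\mathcal D$ carries the non-constant multiplication weight coming from $\Gamma_B$ and $B$, and Assumption~\ref{ass.3} only bounds these from above, so translating ``second eigenvalue of $\mathcal L_B$ strictly below $1$'' into a uniform lower bound on the Dirichlet form over the codimension-one subspace requires some lower control on $B$ (or on $\Gamma_B$); otherwise the dissipation can degenerate where $B$ is small even though the operator gap is positive. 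So either you must complete the spectral argument in the correctly weighted $L^2$ space and show the gap survives the weights, or adopt the paper's explicit $\psi$-construction, which makes the needed lower-bound hypothesis visible and gives $\nu_3$ in closed form. Everything before this point is correct and matches the paper's proof.
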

\begin{proof}
The proof of this Lemma relies on classical results of General Relative Entropy as illustrated in~\cite{MICHEL2005}, that we here only adapt to the case of operators defined on the finite measure space $L^1(\R_+,\mu)$. We write it for completeness in the Appendix~\ref{appendix}. 
\end{proof}
Thanks to Lemma~\ref{lemma1}, it is possible to prove that
\begin{thm}
    Let $B$ satisfy Assumption~\ref{ass.3}, then \begin{equation}\label{rho}
        \tilde{p}^\infty=\rho, \qquad \rho=\int_{\R_+} m(c) p(c,0) \, dc 
    \end{equation}
    is an asymptotic stable equilibrium of~\eqref{lin_eq}. 
\end{thm}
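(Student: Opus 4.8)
The plan is to combine the General Relative Entropy estimates of Lemma~\ref{lemma1} with the explicit fact that $n\equiv 1$, first to identify \emph{which} constant the dynamics selects and then to show it is attracting. I would begin by recalling that every constant is a stationary solution of~\eqref{lin_eq_2}: since $\mathcal{T}_B 1 = \int_{\R_+}B(c,c_\ast)g(c_\ast)\,dc_\ast = \Gamma_B(c)$, for any constant $\rho$ one has $\mathcal{T}_B\rho-\Gamma_B\,\rho=0$, and it was already argued that $\tilde p^\infty\in\langle 1\rangle$. Thus the equilibrium set is known, and the only questions left are the selection of the constant and its stability.

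Next I would fix the constant through the conservation law~\eqref{lemma.cons}. Applied to the solution $h=\tilde p$, it states that $\int_{\R_+}m(c)\tilde p(c,t)g(c)\,dc$ is independent of $t$ and hence equals $\rho=\int_{\R_+}m(c)\tilde p(c,0)g(c)\,dc$. Because $g$ is fixed in time and $\tilde p = p/g$, this rewrites as $\rho=\int_{\R_+}m(c)p(c,0)\,dc$, which is exactly~\eqref{rho}. If $\tilde p(\cdot,t)$ converges to some constant $\tilde p^\infty$, then passing to the limit in the conserved quantity and invoking the normalisation~\eqref{eig.norm}, which for $n\equiv 1$ reads $\int_{\R_+}m(c)g(c)\,dc=1$, gives $\rho=\tilde p^\infty\int_{\R_+}m(c)g(c)\,dc=\tilde p^\infty$. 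This self-consistently pins the equilibrium to $\tilde p^\infty=\rho$.

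Convergence, and with it both attractivity and stability, would then follow directly from the entropy dissipation estimate~\eqref{eq:gen.entr}. Specialising it to $n\equiv 1$, the quadratic functional becomes $\int_{\R_+}m(c)\,(\tilde p(c,t)-\rho)^2\,g(c)\,dc$, which is bounded by its value at $t=0$ times $e^{-\nu_3 t}$ with $\nu_3>0$. Since $m>0$ and $g$ is a probability density, this is a genuine weighted $L^2$ seminorm of $\tilde p(\cdot,t)-\rho$, so it decays exponentially to zero; on the finite measure space, Cauchy--Schwarz upgrades this to $L^1_\mu$-convergence $\tilde p(\cdot,t)\to\rho$, the natural norm of the problem. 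As the rate $\nu_3$ does not depend on the initial datum, the same estimate controls perturbations of the equilibrium and forces them to decay, which is precisely the asymptotic stability of $\tilde p^\infty=\rho$.

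Essentially all the analytical content is absorbed into Lemma~\ref{lemma1}, so I do not expect a serious obstacle here; the delicate points will be the self-consistent identification of the constant via the conserved quantity~\eqref{lemma.cons} together with the normalisation~\eqref{eig.norm}, and the observation that the $L^2_\mu$-decay in~\eqref{eq:gen.entr} transfers to $L^1_\mu$-decay on the finite measure space, so that the stated convergence holds in the norm relevant to the model.
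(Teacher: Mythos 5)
Your proposal follows the paper's own route exactly: apply Lemma~\ref{lemma1} to $h=\tilde p$, use the conservation law~\eqref{lemma.cons} together with $\tilde p(c,0)g(c)=p(c,0)$ and the normalisation $n\equiv 1$ to identify $\rho$, and invoke the exponential decay estimate~\eqref{eq:gen.entr} for asymptotic stability. The only addition is the explicit Cauchy--Schwarz upgrade from weighted $L^2_\mu$ to $L^1_\mu$ convergence, which is a harmless elaboration of what the paper leaves implicit.
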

The proof relies on applying Lemma \ref{lemma1} to $h(c,t)=\tilde{p}(c,t)$ whose evolution is ruled by~\eqref{lin_eq}. The constant $\rho$ is determined by~\eqref{lemma.cons} by remembering that $\tilde{p}(c,0)g(c)=p(c,0)$.
The fact that a priori any constant $\tilde{p}^\infty$ can be the stationary state, is analogous to the specific case considered in~\cite{LnMrTa}, where in the simplified model, differently with respect to the \textit{Two-against-one} model, there may be final average opinions that can be intermediate values in $[-1,1]$. The specific intermediate value is here determined by $\rho$, that is fixed by the initial condition $p(c,0)$, and by the connectivity distribution $g$ and by the kernel $B$ that determine together the operator $\mathcal{T}_B$ and, thus, $m$. Of course, in general $\tilde{p}$ is not conserved (the conserved quantity is $\hat{p}+\hat{q}$), unless $\mathcal{T}_B$ is self-adjoint. In fact, we have that $\mathcal{T}_B^*$ satisfies micro-reversibility, and therefore $\int_{\R_+} T_B(c,c_\ast) \, d\mu(c_\ast)= 1$, and it is, thus, conservative. Then, if $\mathcal{T}_B=\mathcal{T}_B^*$, we have that also $m=1$ and $\rho=\|\tilde{p}\|_{L^1_\mu }=\|p\|_{L^1}$.

In conclusion, the kinetic asymptotic stable equilibrium of~\eqref{och_eq} is
\[
f^\infty(w,c)=p^\infty(c)\delta(w-1)+q^\infty(c)\delta(w+1),
\]
where
\[
p^\infty(c)=\rho g(c) \qquad q^\infty(c)=(1-\rho) g(c),
\]
$\rho$ defined in \eqref{lemma.cons}.

\subsection{Continuous opinion model}

This method of analysis can be applied in further cases. Let us for example consider a multi-agent system where now the opinion is a continuous variable $w\in[-1,1]$ evolving as a consequence of the binary interaction rules~\cite{toscani2018PRE}
\begin{equation}\label{eq:micro}
w'= w-\nu K(c,c_\ast) (w-w_\ast)+ \xi D(w,c),    
\end{equation}
where $\nu \in (0,1/2)$ and $\xi$ is a white noise s.t. $\xi \ge \dfrac{-1+\gamma}{D}$. The latter model extends the popular interaction rules for opinion formation introduced in~\cite{toscani2006CMS}. In~\cite{toscani2018PRE}, the authors consider a multi-agent system in which the agent is characterized by the opinion $w\in[-1,1]$ and by the connectivity $c\in \R_+$ and modify the rule by adding the dependence on $c$ in $K$ and in $D$.
The evolution equation for $f(w,c,t): [-1,1]\times \R_+ \times\R_+ \rightarrow \R_+$, in weak form, is
\begin{multline} \label{och_eq_tosc}
    \frac{d}{dt}\int_{[-1, 1]}\phi(w)f(w, c, t)\,dw  \\
    = \left\langle \int_{\R_+}\int_{[-1, 1]^2}B(c, c_\ast)( \phi(w') - \phi(w))f(w, c, t)f(w_\ast, c_\ast, t)\,dw\,dw_\ast\,dc_\ast\right\rangle.
\end{multline}

In general, $f$ may be written as in~\eqref{def:fact}, where now $w\in[-1,1]$ and
we may define the average opinion given a fixed connectivity as
$$
m_w(c,t):=\int_{[-1, 1]}w h_c(w,t)  \, dw.
$$
In~\cite{toscani2018PRE}, the authors show that when there is statistical independence, i.e.~\eqref{def:indep} holds true, then $m_w$ does not depend on $c$, and the total average $M_w$ defined as
\begin{equation}
    M_w(t):= \int_{\R_+} \int_{[-1, 1]}w h_c(w,t) \, dwg(c) \, dc
\end{equation}
is conserved as $g$ is time independent. Conversely, when there is not statistical independence, then the stationary conditional density function $h_c^\infty$ can be determined by means of the quasi-invariant limit procedure. It is shown that $h_c^\infty$ has a constant (independent of $c$) average $m_w^\infty$, that is actually equal to $M_w^\infty$. Then, they remark that if the ansatz ''$m_w(c,t)$ independent of $c$'' holds true for each $t>0$, then $m_w$ and $M_w$ are conserved in time, and, then, it is possible to determine $M^\infty_w$. Else, its value can be known only if $h_c$ is known at each time or assumptions can be made on $D(w,c)$ in order to obtain an eventually nonlinear equation for $M_w$ that, however, may have a non-unique solution.

Now, setting $\phi(w)=w$ in~\eqref{och_eq_tosc}, we may find the evolution equation of $m_w(c,t)$ that is
\begin{equation} \label{cont_op}
\dfrac{\partial}{\partial t} m_w(c,t)=  \int_{\R_+}B(c, c_\ast)K(c,c_\ast)\left(m_w(c_\ast,t)-m_w(c,t)\right)g(c_\ast)\,dc_\ast,
\end{equation}
that is again a linear scattering equation of the same type of~\eqref{lin_eq} and one can repeat the same analysis, having in this case $m_w$ instead of $\tilde{p}$. 

We now define
\[
 BK(c,c_\ast):=B(c, c_\ast)K(c,c_\ast),
\]
and assume the following.
\begin{assumption}\label{ass.4}
The kernels $B>0$ and $K>0$ and the distribution $g$ are defined in such a way that
\begin{equation}\label{BK_Linf}
BK \in L^\infty(\R_+\times\R_+,\mu\times \mu).    
\end{equation}    
\end{assumption}
Then, we can define
\begin{equation}
    \Gamma_{BK}(c):= \int_{\R_+} BK (c,c_\ast) 
    \, g(c_\ast) dc_\ast,
\end{equation}
satisfying $0<\Gamma_{BK}\in L^\infty (\R_+,\mu),$
and the operator
\begin{align}
      \mathcal{T}_{BK} : L^1(\R_+,\mu) &\longmapsto   L^1(\R_+,\mu)\\
       k &\longmapsto \mathcal{T}_{BK} k(c) := \int_{\R_+} BK (c,c_\ast) k(c_\ast)  \, d\mu(c_\ast),
    \end{align}
    that is well defined, bounded and compact.
    We shall denote by $\mathcal{T}_{BK}^*$ its adjoint.
We denote by $n^{BK}$ and $m^{BK}$ the eigenfunctions of the stationary equation and of its adjoint, i.e. respectively
\begin{equation}
    \Gamma_{BK}(c) n^{BK}(c) =\mathcal{T}_{BK} n^{BK}(c),
\end{equation}
\begin{equation}
    \Gamma_{BK}(c) m^{BK}(c) =\mathcal{T}^*_{BK} m^{BK}(c).
\end{equation}
Again, because of the definition of $\Gamma_{BK}$, we have that the adjoint $\mathcal{T}^*_{BK}$ satisfies micro-reversibility, i.e.
\[
BK(c,c_\ast)g(c_\ast) m^{BK}(c_\ast)=BK(c_\ast,c)g(c) m^{BK}(c),
\]
while the eigenfunction of the operator $\mathcal{T}_{BK}$ is
\[
n^{BK}=1
\]
Again, the operator $\mathcal{T}_{BK}$ is compact and the Krein-Rutman theorem applies.
Following the same procedure as before and Lemma~\ref{lemma1}, we can state the following result.
\begin{thm}\label{thm5}
    Let $B,K$ satisfy Assumption~\ref{ass.4}. Then   
 \begin{equation}\label{rho_cont}
    m_w^\infty = \rho_w, \, \qquad \rho_w=\int_{\R_+} m^{BK}(c) m_w(c,0) g(c) \, dc.
 \end{equation} 
 is an asymptotic stable equilibrium.
\end{thm}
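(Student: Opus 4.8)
The plan is to recognize that equation~\eqref{cont_op} has exactly the same linear-scattering structure as~\eqref{lin_eq}, with the single kernel $B$ replaced by the product kernel $BK(c,c_\ast)=B(c,c_\ast)K(c,c_\ast)$, and then to invoke Lemma~\ref{lemma1} verbatim with this new kernel. Concretely, I would first rewrite~\eqref{cont_op} in the operator form $\partial_t m_w = \mathcal{T}_{BK} m_w - \Gamma_{BK} m_w$. Assumption~\ref{ass.4} guarantees $0<BK\in L^\infty(\R_+\times\R_+,\mu\times\mu)$, so that $\mathcal{T}_{BK}$ is bounded and compact and $0<\Gamma_{BK}\in L^\infty(\R_+,\mu)$, which are precisely the hypotheses under which Lemma~\ref{lemma1} was established. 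Micro-reversibility of $\mathcal{T}_{BK}^*$ together with the definition of $\Gamma_{BK}$ forces $n^{BK}\equiv 1$, while Krein-Rutman provides the unique positive adjoint eigenfunction $m^{BK}$, normalized by the analogues of~\eqref{eig.norm}.

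Next, I would apply Lemma~\ref{lemma1} to $h(c,t)=m_w(c,t)$ with the kernel $BK$ and the pair $(n^{BK},m^{BK})$. Here one should remark that, unlike $\tilde p$, the quantity $m_w$ need not be nonnegative, since $w\in[-1,1]$; this causes no difficulty, because Lemma~\ref{lemma1} only uses convexity of the functional $H$ and positivity of the eigenfunctions $n^{BK},m^{BK}$, not the sign of the solution itself. The conservation identity~\eqref{lemma.cons} then yields that
\[
\rho_w=\int_{\R_+} m^{BK}(c)\,m_w(c,t)\,g(c)\,dc
\]
is constant in time, hence equal to its initial value $\int_{\R_+} m^{BK}(c)\,m_w(c,0)\,g(c)\,dc$, which singles out the candidate equilibrium in~\eqref{rho_cont}.

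Finally, since the stationary state lies in $\langle 1\rangle=\langle n^{BK}\rangle$ and is therefore the constant $\rho_w$, I would invoke the General Relative Entropy estimate~\eqref{eq:gen.entr} with $n^{BK}\equiv 1$, which produces a constant $\nu_3>0$ (the spectral gap associated to $BK$) and the exponential bound
\[
\int_{\R_+} m^{BK}(c)\bigl(m_w(c,t)-\rho_w\bigr)^2 g(c)\,dc \le \Bigl(\int_{\R_+} m^{BK}(c)\bigl(m_w(c,0)-\rho_w\bigr)^2 g(c)\,dc\Bigr)\exp^{-\nu_3 t}.
\]
Because $m^{BK}>0$, this forces $m_w(\cdot,t)\to\rho_w$ in the weighted $L^2$ norm, proving that $m_w^\infty=\rho_w$ is an asymptotically stable equilibrium. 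I do not expect a deep obstacle here: the heavy lifting resides in Lemma~\ref{lemma1}, and the only point requiring genuine care is the bookkeeping that all the structural ingredients of Section~3.1 (compactness of $\mathcal{T}_{BK}$, micro-reversibility giving $n^{BK}\equiv 1$, and the existence of the spectral gap $\nu_3$) transfer unchanged from $B$ to the product kernel $BK$; once this is verified, the theorem follows as an immediate corollary of Lemma~\ref{lemma1}.
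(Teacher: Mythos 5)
Your proposal is correct and follows essentially the same route as the paper, which itself only says ``following the same procedure as before and Lemma~\ref{lemma1}'': you transfer the operator framework from $B$ to the product kernel $BK$, apply Lemma~\ref{lemma1} to $h=m_w$ to get conservation of $\rho_w$ and the exponential General Relative Entropy decay to the constant state. Your extra remark that $m_w$ need not be nonnegative (and that Lemma~\ref{lemma1} does not require this) is a worthwhile clarification that the paper leaves implicit.
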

Again, a priori any possible constant may be the stationary state of~\eqref{cont_op}. The specific value is $\rho_w$ that is fixed by the initial condition $m_w(c,0)$,
by the connectivity distribution $g$ and by the kernel $BK$ that determines $m^{BK}$. 
With respect to the previous section, where there is a conserved quantity that is~\eqref{eq:phat+qhat}, in this case a priori there is no conserved quantity. If there is conservation, then $M_w$ is simply equal to the initial condition. In particular, here it is possible to see that $M_w$ is conserved in time  if and only if $m_w$ is a constant, i.e. it does not depend on $c$, and this is a consequence of Theorem~\ref{thm5}. Moreover, we have that 
$m_w$ is conserved if and only if $\mathcal{T}_{BK}=\mathcal{T}_{BK}^*$.
This also implies that $M_w$ is conserved, being both $M_w$ and $m_w$ constant and equal.
If there is no conservation, anyway we know the stationary asymptotic constant value $\rho_w$. %we can use the evolution equation for $m_w$, and we know that the stable equilibrium is fixed once the initial condition is known.

Then, with respect to~\cite{toscani2018PRE}, we have shown that the independence of $m_w$ with respect to $c$ is a sufficient and necessary condition for conservation. The latter is granted if $\mathcal{T}_{BK}$ is self-adjoint, as $m=n=1$, and this defines a condition on $B$ and $K$.
Else, when there is no conservation, we can determine the, anyway stable, stationary state of~\eqref{cont_op} by using~\eqref{rho_cont}, that, differently from~\cite{toscani2018PRE}, does not require the knowledge of $h_c$.

\section{Co-evolving networks}
We now consider, in the same spirit as~\cite{calzola2023}, a co-evolving network. As a consequence, we assume that the connectivity of the agent may vary in time according to a given microscopic interaction that is independent of $w$, e.g. via a process of the form
\[
dc=I(c)~dt+D(c)~dW,
\]
where $I(c)$ is a deterministic contribution, while $D(c)$ is a diffusion and $W$ is a Wiener process. For the opinion evolution we consider the  Ochrombel type simplified model and the \textit{Two-against-one} model with a kinetic version of this process. We remark that, as $c$ varies, then $g=g(c,t)$ will depend on time.

\subsection{Ochrombel simplified model}\label{sec:co_ev_och}
The weak kinetic equation for $f$ implementing the Ochrombel type simplified dynamics for the opinion and an independent microscopic dynamics for the connectivity is
\begin{multline} \label{och_eq_ev}
    \frac{d}{dt}\int_{\R_+}\int_{\{-1, 1\}}\phi(w, c)f(w, c, t)\,dw\,dc  \\
    = \frac 12\int_{\R_+^2}\int_{\{-1, 1\}^2}B(c, c_\ast)( \phi(w, c_\ast) - \phi(w_\ast, c_\ast))f(w, c, t)f(w_\ast, c_\ast, t)\,dw\,dw_\ast\,dc\,dc_\ast\\
    + \chi\left\langle\int_{\R_+}\int_{\{-1, 1\}}( \phi(w,c') -\phi(w,c))f(w, c, t)\,dw\,dc \right\rangle ,
\end{multline}
with 
\begin{equation}\label{rule.bin.c}
c'=I(c)+D(c)\xi,    
\end{equation}
being $\xi$ a Gaussian random variable (or drawn from some other noise model with mean zero), $\langle \cdot \rangle$ denotes its expectation, and $\chi$ the frequency of update of the connectivity.
Equation~\eqref{och_eq_ev} can be rewritten as
\begin{multline} \label{och_eq_ev2}
    \frac{d}{dt}\int_{\R_+}\int_{\{-1, 1\}}\phi(w, c)f(w, c, t)\,dw\,dc  \\
    = \frac 12\int_{\R_+^2}\int_{\{-1, 1\}^2}B(c, c_\ast)( \phi(w_\ast, c) - \phi(w, c))f(w, c, t)f(w_\ast, c_\ast, t)\,dw\,dw_\ast\,dc\,dc_\ast\\
    + \chi\left\langle\int_{\R_+}\int_{\{-1, 1\}}(\phi(w,c')-\phi(w,c))f(w, c, t)\,dw\,dc\right\rangle,
\end{multline}
with~\eqref{rule.bin.c}.
Concerning the evolution of the connectivity, the term related to the variation of $c$ (second term in the right-hand side in~\eqref{och_eq_ev2}) may be rewritten as~\cite{loy2020CMS}
\begin{multline} \label{och_eq_ev4}
    \frac{d}{dt}\int_{\R_+}\int_{\{-1, 1\}}\phi(w, c)f(w, c, t)\,dw\,dc  \\
    = \frac 12\int_{\R_+^2}\int_{\{-1, 1\}^2}B(c, c_\ast)( \phi(w_\ast, c) - \phi(w, c))f(w, c, t)f(w_\ast, c_\ast, t)\,dw\,dw_\ast\,dc\,dc_\ast\\
    + \chi\int_{\R_+}\int_{\{-1, 1\}}\phi(w,c)\left( \int_{\R_+}   P(c|c')f(t,w,c')dc' -f(w, c, t)\right)\,dw\,dc 
\end{multline}
where $P(c'|c)$ is a (conditional) probability density function of changing connectivity into $c'$ given the prior connectivity $c$, and satisfying
\[
\int_{\R_+} P(c'|c) dc'=1, \quad \int_{\R_+} P(c'|c) c' dc'=I(c), \quad \int_{\R_+} P(c'|c) (c'-I(c))^2 dc'=D^2(c),
\]
i.e. it has average $I(c)$ and variance $D^2(c)$.  The existence of such a probability distribution for \eqref{rule.bin.c} is an implicit assumption on $I$ and $D$.
In strong form, \eqref{och_eq_ev4} may be rewritten as
\begin{multline} \label{och_eq_ev5}
    \frac{d}{dt}\int_{\{-1, 1\}}\varphi(w)f(w, c, t)\,dw\  \\
    = \frac 12\int_{\R_+}\int_{\{-1, 1\}^2}B(c, c_\ast)( \varphi(w_\ast) - \varphi(w))f(w, c, t)f(w_\ast, c_\ast, t)\,dw\,dw_\ast\,dc_\ast\,\\
    + \chi\int_{\{-1, 1\}}\varphi(w)\left( \int_{\R_+}P(c|c')f(t,w,c')dc'-f(w, c, t)\right)\,dw .
\end{multline}
Therefore, setting $\varphi(w)=\dfrac{w+1}{2}$ and dividing by $g(c,t)$, multiplying the second term on the right-hand side by $g(c',t)/g(c',t)$ we get %and suppressing the $\tilde{ }$ above $p$, we get
\begin{multline} \label{och_eq_ev6}
    \partial_t \tilde{p}(c,t) =\frac 12\int_{\R_+} B(c, c_\ast)\tilde{p}(c_\ast,t) g(c_\ast,t)\,dc_\ast - \frac 12\,\Gamma_B(c)\tilde{p}(c,t)\\
    + \chi\left( \int_{\R_+}P(c|c')\dfrac{g(c',t)}{g(c,t)}\tilde{p}(t,c')dc'- \tilde{p}(c,t)\right)-\tilde{p}(c,t)\dfrac{\partial_t g(c,t)}{g(c,t)}.
\end{multline}
Now, setting $\varphi(w)=1$ in \eqref{och_eq_ev5}, we have that
\begin{equation}\label{eq:boltz.T}
\partial_t g(c,t)=\chi \left(\int_{\R_+}P(c|c') g(c',t) dc' -g(c,t)\right),
\end{equation}
and therefore~\eqref{och_eq_ev6} becomes
\begin{multline} \label{och_eq_ev7}
    \partial_t \tilde{p}(c,t) =\frac 12\int_{\R_+} B(c, c_\ast)\tilde{p}(c_\ast) g(c_\ast,t)\,dc_\ast - \frac 12\,\Gamma_B(c)\tilde{p}(c,t)\\
    + \chi\left( \int_{\R_+}P_g(c,c')\tilde{p}(t,c')g(c',t) \, dc'- \tilde{p}(c,t)\right)-\tilde{p}(c,t)\chi\left(\int_{\R_+}P_g(c,c') g(c',t) \, dc'-1\right),
\end{multline}
where
\begin{equation}\label{def:Pg}
P_g(c,c')= P(c|c')\dfrac{1}{g(c,t)}.
\end{equation}
Equation~\eqref{och_eq_ev6} may be rewritten in compact form as
\begin{multline} \label{och_eq_ev8}
    \partial_t \tilde{p}(c,t) =\int_{\R_+}\left[\frac 12 B(c, c_\ast)+\chi P_g(c,c_\ast)\right]\tilde{p}(c_\ast,t) g(c_\ast,t)\,dc_\ast - \left[\frac 12\,\Gamma_B(c) +\chi\Gamma_P(c)\right]\tilde{p}(c,t),
\end{multline}
where
\begin{equation}\label{def:GammaP}
\Gamma_P(c)=\int_{\R_+}P_g(c,c_\ast)g(c_\ast,t) \, dc_\ast.
\end{equation}
Again, the equation for $\tilde{p}$ is a linear scattering equation. However now, as we are dealing with a time-varying $g$, but we want to determine the stationary states of~\eqref{och_eq_ev8}, firstly we need to identify the asymptotic limit $g^\infty$. In the present notation, we report here the result given in~\cite{bisi2024Physd}.
\begin{thm}
Let the mapping $c_\ast\mapsto P(\cdot\,\vert\,c_\ast)$ be Lipschitz continuous with respect to the $1$-Wasserstein metric $W_1$ in the space of the probability measures on $\R_+$ , i.e. let a constant $\textrm{Lip}{(P)}>0$ exist such that
\begin{equation}\label{propr.PLip}
W_1\bigl(P(\cdot\,\vert\,c_\ast),\,P(\cdot\,\vert\,d_\ast)\bigr)\leq
	\textrm{Lip}{(P)}\abs{d_\ast-c_\ast},
		\qquad \forall\,\,c_\ast,\,d_\ast\in\R_+. 
  \end{equation}
If $\chi\textrm{Lip}{(P)}<\frac{1}{2}$ then~\eqref{eq:boltz.T} admits a unique equilibrium distribution $g_\infty$, which is a probability measure on $\R_+$ and which is also globally attractive, i.e.
$$ \lim_{t\to +\infty}W_1(g(\cdot, \, t),\,g^\infty)=0 $$
for every solution $g$ to~\eqref{eq:boltz.T}.
\end{thm}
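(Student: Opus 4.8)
The plan is to carry out the whole analysis in the complete metric space $\bigl(\mathcal{P}_1(\R_+),W_1\bigr)$ of probability measures on $\R_+$ with finite first moment, and to realise the flow of \eqref{eq:boltz.T} as a strict contraction there. First I would rewrite \eqref{eq:boltz.T} as $\partial_t g=\chi(\mathcal{P}g-g)$, where the gain (mixing) operator is
\begin{equation*}
\mathcal{P}g(c):=\int_{\R_+}P(c\,|\,c')\,g(c',t)\,dc'.
\end{equation*}
Because $\int_{\R_+}P(c\,|\,c')\,dc=1$ and $P(\cdot\,|\,c')\ge 0$, the operator $\mathcal{P}$ maps $\mathcal{P}_1(\R_+)$ into itself, and the loss term $-\chi g$ is exactly balanced by the gain, so that total mass is conserved along the flow and $g(\cdot,t)$ stays a probability measure for all $t\ge 0$. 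This makes $W_1$ the natural object to monitor.

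The key lemma is that $\mathcal{P}$ is a $W_1$-contraction with constant $\textrm{Lip}(P)$. I would prove this by a coupling argument exploiting the mixture structure of $\mathcal{P}$: given $\mu,\nu\in\mathcal{P}_1(\R_+)$ with an optimal plan $\gamma$ for $W_1(\mu,\nu)$, and an optimal coupling of $P(\cdot\,|\,c')$ and $P(\cdot\,|\,d')$ for each pair $(c',d')$, their superposition is an admissible coupling of $\mathcal{P}\mu$ and $\mathcal{P}\nu$, whence
\begin{equation*}
W_1(\mathcal{P}\mu,\mathcal{P}\nu)\le\int_{\R_+^2}W_1\bigl(P(\cdot\,|\,c'),P(\cdot\,|\,d')\bigr)\,d\gamma(c',d')\le \textrm{Lip}(P)\int_{\R_+^2}\abs{c'-d'}\,d\gamma=\textrm{Lip}(P)\,W_1(\mu,\nu),
\end{equation*}
where the middle step is precisely the Lipschitz hypothesis \eqref{propr.PLip}.

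Next I would turn this into a quantitative decay for the flow. Writing the mild (Duhamel) form $g(t)=e^{-\chi t}g(0)+\chi\int_0^t e^{-\chi(t-s)}\mathcal{P}g(s)\,ds$ and using the joint convexity of $W_1$ together with the contraction lemma, two solutions $g_1,g_2$ satisfy
\begin{equation*}
W_1(g_1(t),g_2(t))\le e^{-\chi t}W_1(g_1(0),g_2(0))+\chi\,\textrm{Lip}(P)\int_0^t e^{-\chi(t-s)}W_1(g_1(s),g_2(s))\,ds.
\end{equation*}
A Gronwall estimate applied to $t\mapsto e^{\chi t}W_1(g_1(t),g_2(t))$ then produces $W_1$-contraction of the flow with an explicit rate, which simultaneously yields uniqueness of any equilibrium and its global attractiveness once existence is in hand. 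Existence of $g^\infty$ I would obtain from the Banach fixed-point theorem, either as the unique fixed point of $\mathcal{P}$ itself (a fixed point of $\mathcal{P}$ is exactly a stationary solution of \eqref{eq:boltz.T}) or, equivalently and more robustly, as the unique fixed point of a time-$\tau$ flow map, which is a strict contraction under the smallness assumption.

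The main obstacle is not the abstract contraction estimate but ensuring that everything genuinely lives in $\mathcal{P}_1(\R_+)$, and this is where the quantitative threshold $\chi\,\textrm{Lip}(P)<\tfrac12$ enters. One must propagate a uniform-in-time bound on the first moment of $g(\cdot,t)$ — and, for the tightness/compactness needed to identify the limit as a true probability measure on $\R_+$ rather than mass escaping to infinity, a uniform bound on a higher moment as well — controlling the drift $I$ and diffusion $D$ encoded in $P$. Establishing this moment control and verifying that the candidate fixed point inherits a finite first moment is the delicate technical point; the smallness of $\chi\,\textrm{Lip}(P)$ is precisely what keeps these moments from blowing up and renders the relevant map a strict contraction. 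Once this is secured, the $W_1$-coupling mechanism above upgrades it to global exponential convergence $W_1(g(\cdot,t),g^\infty)\to 0$.
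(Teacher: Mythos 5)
The first thing to note is that the paper does not actually prove this theorem: it is quoted, in the paper's notation, from the reference \cite{bisi2024Physd}, so there is no in-paper argument to measure yours against. Your machinery is nevertheless the canonical one for statements of this type, and most of it is sound: the gain operator $\mathcal{P}$ preserves probability measures; the superposition-of-couplings argument gives $W_1(\mathcal{P}\mu,\mathcal{P}\nu)\le \textrm{Lip}(P)\,W_1(\mu,\nu)$; the Duhamel formula exhibits $g(t)$ as a convex combination (total weight one) of $g(0)$ and the measures $\mathcal{P}g(s)$; and joint convexity of $W_1$ plus Gronwall yields $W_1(g_1(t),g_2(t))\le e^{-\chi(1-\textrm{Lip}(P))t}\,W_1(g_1(0),g_2(0))$. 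Also, the moment propagation you single out as the delicate point is in fact immediate: \eqref{propr.PLip} gives $\int_{\R_+} c\,P(c\,\vert\,c')\,dc\le \int_{\R_+} c\,P(c\,\vert\,0)\,dc+\textrm{Lip}(P)\,c'$, so $\mathcal{P}$ maps $\mathcal{P}_1(\R_+)$ into itself and the first moment of $g(\cdot,t)$ obeys a closed linear differential inequality; no separate smallness condition is needed there.

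The genuine gap is that your argument proves the theorem under the hypothesis $\textrm{Lip}(P)<1$, not under the stated hypothesis $\chi\,\textrm{Lip}(P)<\frac{1}{2}$, and for general $\chi>0$ these conditions are incomparable (take $\chi$ small and $\textrm{Lip}(P)>1$). Your closing claim that the smallness of $\chi\,\textrm{Lip}(P)$ is ``precisely what renders the relevant map a strict contraction'' is contradicted by your own estimate: the contraction factor you derive is $e^{-\chi(1-\textrm{Lip}(P))t}$, in which $\chi$ only rescales time --- as it must, since $\chi$ multiplies the entire right-hand side of \eqref{eq:boltz.T} and therefore cannot influence which equilibria exist or whether they attract. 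Concretely, for $P(c\,\vert\,c')=\delta(c-2c')$ one has $\textrm{Lip}(P)=2$, the hypothesis $\chi\,\textrm{Lip}(P)<\frac{1}{2}$ holds for every $\chi<\frac{1}{4}$, and yet the first moment satisfies $\frac{d}{dt}m_1=\chi m_1$ and blows up, so $W_1$-convergence to a probability measure on $\R_+$ fails. So either state and prove the result under the condition your argument actually uses, $\textrm{Lip}(P)<1$ (with rate $\chi(1-\textrm{Lip}(P))$), and flag the discrepancy with the quoted hypothesis, or return to \cite{bisi2024Physd} to identify the formulation in which the constant $\frac{1}{2}$ and the coupling between $\chi$ and $\textrm{Lip}(P)$ genuinely arise; as it stands, the stated hypothesis is never used correctly anywhere in your proof.
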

Then, we introduce the finite measure space related to the stationary probability density function $g^\infty$
\[
L^1(\R_+,\mu^\infty), \qquad d\mu^\infty =g^\infty(c) dc.
\]

We can remark that the stationary solutions $\tilde{p}^\infty$ of~\eqref{och_eq_ev8} are given by
\[
\left[\frac 12\,\Gamma_B^\infty(c) +\chi\Gamma_P^\infty(c)\right]\tilde{p}^\infty(c)=\mathcal{T}_{BP} \tilde{p}^\infty(c) 
\]
where
\begin{align}
      \mathcal{T}_{BP} : L^1(\R_+,\mu^\infty) &\longmapsto   L^1(\R_+,\mu^\infty)\\
       k &\longmapsto \mathcal{T}_{BP} k(c) := \mathcal{T}_B^\infty k(c) + \mathcal{T}_P k(c)
    \end{align}
    and
\begin{align}\label{def_TP}
      \mathcal{T}_{P} : L^1(\R_+,\mu^\infty) &\longmapsto   L^1(\R_+,\mu^\infty)\\
       k &\longmapsto \mathcal{T}_{P} k(c) := \int_{\R_+}P_g(c,c_\ast) k(c_\ast) \, g^\infty(c_\ast) dc_\ast
    \end{align}    
%\[
%T_{BP}(c,c_\ast)= \frac 12 B(c, c_\ast)+\chi P^\infty_g(c, c_\ast),
%\]
and being $P^\infty_g, \Gamma_B^\infty, \Gamma_P^\infty$ defined by~\eqref{def:Pg},\eqref{def:GammaB},\eqref{def:GammaP} with $g=g^\infty$.
%Therefore, we have that
%\[
%\int_{\R_+} T_{BP}(c,c_\ast) \, d\mu^\infty(c_\ast)= 1. 
%\]
We consider again Assumption~\ref{ass.3}, and, similarly, also the following.
\begin{assumption}\label{ass.5}
We assume that $P$ and $g$ are defined in such a way that
\begin{equation}\label{Pg_infty}
P_g^\infty \in L^\infty(\R_+\times\R_+, \mu \times \mu), \quad P_g^\infty >0 \, .   
\end{equation}    
\end{assumption}
We again introduce the eigenfunctions $n^{BP}, m^{BP}$ of the operators $\mathcal{T}_{BP}$ and $\mathcal{T}^*_{BP}$, and we have again that $n^{BP}=1$ and $m^{BP}$ is the unique positive eigenfunction determined by the normalization condition $\int_{\R_+} n^{BP}(c) g^\infty(c) \, dc=\int_{\R_+} m^{BP}(c) g^\infty(c) \, dc=1$.

Following the same steps as in Section 3, again we can state the following result.
\begin{thm}
    Let $P$ satisfy~\eqref{propr.PLip} with $\chi\textrm{Lip}(P) <\frac{1}{2}$, and let $B,P_g^\infty$ satisfy Assumptions~\ref{ass.3},\ref{ass.5}. Then the asymptotic stationary equilibrium of~\eqref{och_eq_ev8} is
    \[
    \tilde{p}^\infty=\rho_g, \qquad \rho_g=\int_{\R_+} m^{BP}(c) p(c,0) \, dc.
    \]
\end{thm}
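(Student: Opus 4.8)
The plan is to reduce the non-autonomous problem~\eqref{och_eq_ev8} to the autonomous linear scattering setting already analysed for the Ochrombel model in Section 3, and then to invoke the General Relative Entropy machinery of Lemma~\ref{lemma1}. First I would invoke the cited relaxation theorem: since $\chi\textrm{Lip}(P)<\frac12$, the marginal equation~\eqref{eq:boltz.T} admits a unique, globally attractive equilibrium $g^\infty$ with $W_1(g(\cdot,t),g^\infty)\to 0$. This legitimises working in the finite measure space $L^1(\R_+,\mu^\infty)$, $d\mu^\infty=g^\infty(c)\,dc$, and replacing the time-dependent coefficients of~\eqref{och_eq_ev8} by their limits, so that the equation for $\tilde p$ becomes the autonomous linear scattering equation $\partial_t\tilde p=\mathcal{T}_{BP}\tilde p-\left[\frac12\Gamma_B^\infty+\chi\Gamma_P^\infty\right]\tilde p$, with $\mathcal{T}_{BP}$ as in~\eqref{def_TP}.

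Next I would establish the spectral picture exactly as for $\mathcal{T}_B$ in Section 3. Under Assumptions~\ref{ass.3} and~\ref{ass.5} the combined gain kernel is strictly positive and lies in $L^\infty(\R_+\times\R_+,\mu^\infty\times\mu^\infty)\subset L^2(\R_+\times\R_+,\mu^\infty\times\mu^\infty)$, so $\mathcal{T}_{BP}$ is bounded and compact on $L^1(\R_+,\mu^\infty)$. By construction, the loss coefficient in~\eqref{och_eq_ev8} is precisely the value obtained by applying the gain operator to the constant function, so that $n^{BP}\equiv1$ solves the stationary equation and is an eigenfunction of the associated normalised operator $\left(\frac12\Gamma_B^\infty+\chi\Gamma_P^\infty\right)^{-1}\mathcal{T}_{BP}$ with unit eigenvalue. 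Applying the Krein--Rutman theorem to this strongly positive compact operator shows that $1$ is the spectral radius and $n^{BP}\equiv1$ is the unique positive eigenfunction; the stated micro-reversibility relation then furnishes the unique positive adjoint eigenfunction $m^{BP}$, fixed by the normalisation $\int_{\R_+}n^{BP}m^{BP}g^\infty\,dc=1$.

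I would then apply Lemma~\ref{lemma1}, with $B$ replaced by the combined kernel $\frac12 B+\chi P_g^\infty$, eigenfunctions $n^{BP},m^{BP}$ and measure $\mu^\infty$, to $h(c,t)=\tilde p(c,t)$. The conservation statement~\eqref{lemma.cons} gives that $\rho_g=\int_{\R_+}m^{BP}(c)\tilde p(c,t)g^\infty(c)\,dc$ is preserved by the frozen flow, while the entropy estimate~\eqref{eq:gen.entr} yields exponential $L^2_{\mu^\infty}$-decay of $\tilde p-\rho_g n^{BP}$ to zero at rate $\nu_3$. Since $n^{BP}\equiv1$, this shows $\tilde p(c,t)\to\rho_g$, a constant, which is the asserted stationary equilibrium; the value of $\rho_g$ is read off from the data via $\tilde p(c,0)g(c,0)=p(c,0)$, giving $\rho_g=\int_{\R_+}m^{BP}(c)p(c,0)\,dc$.

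The main obstacle is the genuine non-autonomy of~\eqref{och_eq_ev8}: because $g=g(c,t)$ evolves simultaneously with $\tilde p$, Lemma~\ref{lemma1} does not apply verbatim, and the freezing of the coefficients at $g^\infty$ must be justified quantitatively. I would close this gap by a two-time-scale perturbation argument, controlling the difference between the non-autonomous flow and the frozen flow by combining the $W_1$-relaxation rate of $g$ with the contraction rate $\nu_3$ of the frozen entropy estimate~\eqref{eq:gen.entr}. A related delicate point is the precise identification of $\rho_g$: the functional conserved by the frozen problem is $\int_{\R_+}m^{BP}\tilde p\,g^\infty\,dc$, which coincides with $\int_{\R_+}m^{BP}(c)p(c,0)\,dc$ only when $g(\cdot,0)=g^\infty$, so in general one must either exhibit a quantity conserved along the full coupled dynamics or evaluate the frozen conserved functional once $g$ has relaxed. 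This interplay between the relaxation of $g$ and the entropy dissipation of $\tilde p$ is where the real work lies.
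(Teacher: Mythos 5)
Your proposal follows the same route as the paper: relax $g$ to its unique globally attractive equilibrium $g^\infty$ via the cited Wasserstein contraction theorem, freeze the coefficients of \eqref{och_eq_ev8} at $g^\infty$ to obtain an autonomous linear scattering equation driven by the compact positive operator $\mathcal{T}_{BP}$ on $L^1(\R_+,\mu^\infty)$, identify $n^{BP}\equiv 1$ and the adjoint eigenfunction $m^{BP}$ via Krein--Rutman and micro-reversibility, and conclude by the General Relative Entropy estimates of Lemma~\ref{lemma1}. Since the paper itself offers nothing beyond ``following the same steps as in Section 3,'' your argument is a faithful reconstruction of the intended proof. The two obstacles you flag at the end are genuine and are not resolved in the paper either: (i) equation \eqref{och_eq_ev8} is non-autonomous, because $\Gamma_B$, $\Gamma_P$ and $P_g$ all depend on $g(\cdot,t)$, so Lemma~\ref{lemma1} applies only to the frozen limit flow, and the convergence of the true solution requires a quantitative comparison of the two flows; your proposed interleaving of the $W_1$-relaxation rate of $g$ with the dissipation rate $\nu_3$ is the right idea but is not carried out, so the proposal is not yet a complete proof. (ii) The functional $\int_{\R_+} m^{BP}(c)\,\tilde p(c,t)\,g^\infty(c)\,dc$ is conserved only by the frozen dynamics, so the identification $\rho_g=\int_{\R_+} m^{BP}(c)\,p(c,0)\,dc=\int_{\R_+} m^{BP}(c)\,\tilde p(c,0)\,g(c,0)\,dc$ implicitly requires $g(\cdot,0)=g^\infty$ or an exact conservation law along the full coupled flow, which does not hold in general. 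In short: same approach as the paper, with the weak points correctly diagnosed but not yet repaired.
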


Remembering that the constant steady state is actually $\tilde{p}^\infty$ defined in~\eqref{def:tildepq}, going back to the original definition of $p$~\eqref{eq:f}, we then have that the stationary distribution of the connectivity of the agents having opinion $w=1$ is
\begin{equation}
    p^\infty(c)=g^\infty(c) \rho_g.
\end{equation}
Again, $\tilde{p}^\infty$ is conserved if $\mathcal{T}_{BP}$ is self-adjoint. However, we know the stationary solution $\rho_g$ if $m^{BP}$ is determined and the initial condition $p(c,0)$ is given. %Else, we need to solve the evolution equation~\eqref{och_eq_ev8}. Here, the problem is that the time evolution of $g(c,t)$ should be known. As such, for the moment we leave it as an open problem.

\subsection{Two-against-one model}
We now consider the Two-against-one model on a co-evolving network.
Exploiting the previous computations, the $c$-strong kinetic equation for $f$ implementing the Two-against-one dynamics for the opinion and the same dynamics for the connectivity is
\begin{multline}
	\frac{d}{dt}\int_{\{-1,\,1\}}\varphi(w)f(w,c,t)\,dw 
	=\frac{\eta}{3}\int_{\R_+^2}\int_{\{-1,\,1\}^3}\omega(c,c_\ast,c_{\ast\ast})\bigl(\varphi(w_{\ast\ast})-\varphi(w)\bigr) \\
		\times f(w,c,t)f(w_{\ast\ast},c_\ast,t)f(w_{\ast\ast},c_{\ast\ast},t)\,dw\,dw_\ast\,dw_{\ast\ast}\,dc_\ast\,dc_{\ast\ast} \\
    + \chi\int_{\{-1, 1\}}\varphi(w)\left( \int_{\R_+}P(c|c')f(t,w,c')dc'-f(w, c, t)\right)\,dw.
	\label{eq:Boltz_c_strong_coev}
\end{multline}
Then, following the same procedure as in Sec.~\ref{sec:allen_cahn} and~\ref{sec:co_ev_och}, we find the evolution equation for the quantity~\eqref{def:tildepq}, that, dropping the$~\tilde{ }$ is
\begin{equation}\label{coev_allen_cahn}
   \partial_t p(c,t)=\mathcal{D}(p)+\Gamma(c)\left[p(1-p)(2p-1)\right]+\int_{\R_+}\chi P_g(c,c_\ast)p(c_\ast,t) g(c_\ast,t)\,dc_\ast - \chi\Gamma_P(c)p(c,t) 
\end{equation}
where $\mathcal{D}$ is defined as in~\eqref{def.D}.
The stationary states $p_i$ of~\eqref{coev_allen_cahn} are the solutions of
\begin{equation}\label{coev_stat}
  \mathcal{D}(p_i)+\Gamma(c)\left[p_i(1-p_i)(2p_i-1)\right]+\int_{\R_+}\chi P_g^\infty(c,c_\ast)p_i(c_\ast,t) g(c_\ast,t)\,dc_\ast - \chi\Gamma^\infty_P(c)p_i(c,t)=0.  
\end{equation}
We remark that, under the assumption~\eqref{Pg_infty}, following Sec.~\ref{sec:allen_cahn}, we find that $p_0=0, p_1=1, p_2=\frac{1}{2}$ are constant solutions of~\eqref{coev_stat}. Repeating the same linear stability analysis as in Sec.~\ref{sec:lin_stab}, we find that $p_0=0, p_1=1$ are stable stationary states. On the other hand, for the stationary state $p_2$, we have that the evolution equation of the perturbation $k_2$ is
\begin{multline*}
\partial_t k_2=\int_{\R_+^2}\dfrac{\omega(c,c_\ast,c_{\ast\ast})}{2}g(c_\ast)g(c_{\ast\ast})\big(k_2(c_\ast,t)+k_2(c_{\ast\ast},t)-2k_2(c,t)\big) \, dc_\ast \, dc_{\ast\ast}+\dfrac{\Gamma(c)}{2}k_2\\
+\int_{\R_+}\chi P_g(c,c_\ast)k_2(c_\ast,t) g(c_\ast,t)\,dc_\ast - \chi\Gamma_P(c)k_2(c,t).
\end{multline*}
If we now consider $k_2= e^{\lambda t} k(c)$, the latter becomes
\begin{multline}\label{def.eigenvalue_pb_coev}
    \lambda k(c)=\int_{\R_+^2}\omega(c,c_\ast,c_{\ast\ast})g(c_\ast)g(c_{\ast\ast})k(c_\ast)\, dc_\ast \, dc_{\ast\ast}-\dfrac{\Gamma(c)}{2}k(c)\\
    +\int_{\R_+}\chi P_g(c,c_\ast)k(c_\ast) g(c_\ast,t)\,dc_\ast - \chi\Gamma_P(c)k(c),
\end{multline}
where we have used Assumption~\ref{ass.1}. We now introduce
\begin{equation}
\begin{aligned}
    &\mathcal{B}_\lambda:L_1(\R_+,\mu^\infty) \longrightarrow L_1(\R_+,\mu^\infty)\\
    &\mathcal{B}_\lambda k(c):= \dfrac{\displaystyle \int_{\R_+} G^\infty(c,c_\ast)  g^\infty(c_\ast) k(c_\ast) \, dc_{\ast} +\chi \int_{\R_+} P_g^\infty (c,c_\ast) g^\infty(c_\ast) k(c_\ast) \, dc_{\ast}}{\lambda+\dfrac{\Gamma^\infty(c)}{2}+\chi \Gamma_P^\infty(c)},
\end{aligned}
\end{equation}
where the apex $^\infty$ indicates the corresponding quantity defined with $g=g^\infty$. 
%We define
%\begin{align}
%      \mathcal{T}_{P} : L^1(\R_+,\mu^\infty) &\longmapsto   L^1(\R_+,\mu^\infty)\\
%       k &\longmapsto \mathcal{T}_{P} k(c) := \int_{\R_+} P_g^\infty (c,c_\ast) k(c_\ast)  \, d\mu^\infty(c_\ast).
%    \end{align}
%    Again, we assume 
%\begin{equation}
%    G^\infty \in L^\infty(\R_+\times \R_+,d\mu^\infty\times d\mu^\infty), \quad G^\infty>0 \, a.e.
%\end{equation}
%\begin{equation}
%    P_g^\infty \in L^\infty(\R_+\times \R_+,d\mu^\infty\times d\mu^\infty), \quad P_g^\infty>0 \, a.e.
%\end{equation}
As a consequence of Assumptions~\ref{ass.2},\ref{ass.5}, we have that
\[
 \dfrac{G^\infty(c,c_\ast)  +\chi  P_g^\infty (c,c_\ast)}{\lambda+\dfrac{\Gamma^\infty(c)}{2}+\chi \Gamma_P^\infty(c)} \in L^\infty(\R_+\times \R_+,\mu^\infty\times \mu^\infty).
\]
Following the same arguments as in Section~\ref{sec:lin_stab}, we can apply the Krein-Rutman Theorem and we remark that for $k \equiv 1$, $\delta(\lambda)=1$ is an eigenvalue-eigenfunction couple if 
\[
\int_{\R_+ }\dfrac{\Gamma^\infty(c)+\chi \Gamma_P^\infty(c)}{\lambda +\Gamma^\infty(c)/2+\chi \Gamma_P^\infty(c)} \, g^\infty(c) dc =1.
\]
Arguing like in Section~\ref{sec:lin_stab}, we can see that the latter is not verified if $\lambda <0$.

We can summarize these observations in one theorem.
\begin{thm} \label{thm_pos}
    Let $\omega$ satisfy Assumptions~\ref{ass.1}-\ref{ass.2}, $B$ satisfy assumption~\ref{ass.3}, $P$ satisfy~\eqref{propr.PLip} and $P_g^\infty$ satisfy Assumption~\ref{ass.5}.
    %Let the kernel $G^\infty$ be defined by \eqref{def_G}, $P_g^\infty$ defined by \eqref{def:Pg}, $\Gamma^\infty, \Gamma_P^\infty$ be defined by~\eqref{def:GammaB}-\eqref{def:GammaP} respectively, with $g=g^\infty$. %satisfy \eqref{propr_Gl}
    Then the operator $\mathcal D'\left(\frac 12\right) + \frac \Gamma 2 Id +\chi \mathcal{T}_P-\chi Id$, where $\mathcal{T}_P$ is defined in~\eqref{def_TP}, has a positive eigenvalue and this means that the stationary state $\tilde{p}_2 = \frac 12$ of equation \eqref{coev_allen_cahn} is linearly unstable.
\end{thm}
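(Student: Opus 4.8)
The plan is to transpose the Section~\ref{sec:lin_stab} machinery to the stationary measure space $L^1(\R_+,\mu^\infty)$, which is available because the preceding theorem (under $\chi\,\textrm{Lip}(P)<\tfrac12$) supplies a unique, globally attractive equilibrium network $g^\infty$. First I would confirm, exactly as established just below~\eqref{coev_stat} and following Section~\ref{sec:allen_cahn}, that $p_0=0$, $p_1=1$, $p_2=\tfrac12$ are the constant stationary states of~\eqref{coev_stat} and that $p_0,p_1$ are linearly stable. The theorem concerns only $p_2$: linearising~\eqref{coev_allen_cahn} about $\tfrac12$ and inserting $k_2=e^{\lambda t}k(c)$ yields~\eqref{def.eigenvalue_pb_coev}, that is $\big(\mathcal D'(\tfrac12)+\tfrac{\Gamma}{2}Id+\chi\mathcal T_P-\chi Id\big)k=\lambda k$. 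Using Assumption~\ref{ass.1} I would rewrite this, as in~\eqref{eq:eigenvalue_gen}, in the fixed-point form $\mathcal B_\lambda k=k$, so that exhibiting a positive eigenvalue of the stability operator is equivalent to finding $\lambda>0$ at which the spectral radius $\delta(\lambda)$ of $\mathcal B_\lambda$ equals $1$ and is carried by a positive eigenfunction.

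The second step is to place $\mathcal B_\lambda$ under Krein--Rutman for each $\lambda>0$. Its kernel $\big(G^\infty+\chi P_g^\infty\big)\big/\big(\lambda+\tfrac{\Gamma^\infty}{2}+\chi\Gamma_P^\infty\big)$ lies in $L^\infty(\R_+^2,\mu^\infty\times\mu^\infty)$ — the estimate displayed before the theorem, granted by Assumptions~\ref{ass.2} and~\ref{ass.5} — hence in $L^2(\R_+^2,\mu^\infty\times\mu^\infty)$ by the embedding of the finite measure space, so $\mathcal B_\lambda$ is bounded and compact~\cite{Krasno}; strict positivity of $G^\infty$ and $P_g^\infty$ makes it strongly positive on the reproducing cone $\mathcal K$. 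Krein--Rutman then gives, for each $\lambda>0$, a Perron root $\delta(\lambda)>0$ with a unique positive eigenfunction $n_\lambda$, and a strictly positive adjoint eigenfunction $m_\lambda$ of $\mathcal B_\lambda^*$ for the same $\delta(\lambda)$.

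It remains to produce a crossing $\delta(\lambda^*)=1$ with $\lambda^*>0$. Evaluating on $k\equiv 1$ gives $b_\lambda(c):=\mathcal B_\lambda\mathbf 1(c)=\big(\Gamma^\infty(c)+\chi\Gamma_P^\infty(c)\big)\big/\big(\lambda+\tfrac{\Gamma^\infty(c)}{2}+\chi\Gamma_P^\infty(c)\big)$, whose numerator exceeds its denominator by $\tfrac{\Gamma^\infty(c)}{2}-\lambda$. For $\lambda\le 0$ this is positive, so $b_\lambda>1$ pointwise and the normalisation $\int_{\R_+}b_\lambda\,g^\infty\,dc=1$ — the co-evolving analogue of~\eqref{eq:lambda_norm} — cannot hold, which is precisely the contradiction of Section~\ref{sec:lin_stab}, while as $\lambda\to+\infty$ the same integral tends to $0$. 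I would then conclude via the spectral root rather than the scalar integral: pairing $\mathcal B_\lambda n_\lambda=\delta(\lambda)n_\lambda$ with $m_\lambda$ represents $\delta(\lambda)$ as the $m_\lambda g^\infty$-weighted mean of $b_\lambda$; since $b_0>1$ pointwise and $m_\lambda\gg 0$, this gives $\delta(\lambda)>1$ for small $\lambda>0$, whereas $\delta(\lambda)\to 0$ as $\lambda\to+\infty$. The Perron root is continuous and strictly decreasing in $\lambda$ (its kernel decreases in $\lambda$), so the intermediate value theorem yields a unique $\lambda^*>0$ with $\delta(\lambda^*)=1$; the positive $n_{\lambda^*}$ is then an eigenfunction of $\mathcal D'(\tfrac12)+\tfrac{\Gamma}{2}Id+\chi\mathcal T_P-\chi Id$ with positive eigenvalue $\lambda^*$, i.e. $\tilde p_2=\tfrac12$ is linearly unstable.

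The step I expect to be the real obstacle is this last one. The point is that $k\equiv 1$ is in general \emph{not} the Krein--Rutman eigenfunction of $\mathcal B_\lambda$ (it would be only if $\Gamma^\infty$ were constant in $c$, as happens for $c$-independent kernels $\omega$), so the scalar identity $\int b_\lambda g^\infty\,dc=1$ does not by itself read off $\delta(\lambda)=1$; the adjoint-weighted representation of $\delta(\lambda)$ is what legitimately transfers the pointwise inequality $b_0>1$ into $\delta(0^+)>1$. One must moreover control the behaviour as $\lambda\downarrow 0$: if $\inf_c\Gamma^\infty(c)>0$ then $b_\lambda>1$ uniformly for $\lambda<\tfrac12\inf_c\Gamma^\infty$ and the crossing is immediate, whereas in the degenerate case $\inf_c\Gamma^\infty=0$ one must argue that the $m_\lambda g^\infty$-mass does not escape to the region where $b_\lambda\approx 1$. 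Both points are handled by the strong positivity supplied by Assumptions~\ref{ass.2} and~\ref{ass.5}, which is why those hypotheses, rather than mere positivity of the kernels, are required.
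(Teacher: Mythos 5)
Your proposal follows the paper's own route up to and including the key reduction: linearise \eqref{coev_allen_cahn} about $\tilde p_2=\tfrac12$, obtain the eigenvalue problem \eqref{def.eigenvalue_pb_coev}, recast it as the fixed-point problem $\mathcal B_\lambda k=k$ on $L^1(\R_+,\mu^\infty)$, check the $L^\infty$ (hence $L^2$) bound on the kernel from Assumptions~\ref{ass.2} and~\ref{ass.5}, and invoke compactness plus Krein--Rutman. Where you genuinely diverge is the final step. The paper simply transports the argument of Section~\ref{sec:lin_stab}: it tests $k\equiv 1$, writes the co-evolving analogue of \eqref{eq:lambda_norm}, and rules out $\lambda<0$ by the pointwise inequality on the integrand; no more is said. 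You correctly observe that $k\equiv 1$ is the Perron eigenfunction of $\mathcal B_\lambda$ only when $\bigl(\Gamma^\infty+\chi\Gamma_P^\infty\bigr)\big/\bigl(\lambda+\Gamma^\infty/2+\chi\Gamma_P^\infty\bigr)$ is constant in $c$, so the scalar condition does not by itself certify $\delta(\lambda)=1$, and you repair this by an adjoint-weighted mean plus an intermediate-value argument on the Perron root: note only that the correct pairing is $\langle m_\lambda,\mathcal B_\lambda\mathbf 1\rangle=\langle\mathcal B_\lambda^* m_\lambda,\mathbf 1\rangle=\delta(\lambda)\langle m_\lambda,\mathbf 1\rangle$, i.e.\ $\delta(\lambda)$ is the $m_\lambda\,d\mu^\infty$-mean of $b_\lambda=\mathcal B_\lambda\mathbf 1$, rather than the pairing of $m_\lambda$ with $n_\lambda$ as you wrote it, which is tautological. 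This buys a tighter and more honest argument than the one printed in the paper, at the price of the extra control you yourself flag: Assumptions~\ref{ass.2} and~\ref{ass.5} give positivity and boundedness of the kernels but no lower bound on $\Gamma^\infty$, so when $\inf_c\Gamma^\infty=0$ the claim that the $m_\lambda\,d\mu^\infty$-mass does not concentrate on $\{c:\Gamma^\infty(c)\le 2\lambda\}$ as $\lambda\downarrow 0$, as well as the continuity and strict monotonicity of $\lambda\mapsto\delta(\lambda)$, still need to be written out; the paper avoids these delicacies only by gliding over the distinction between the averaged condition and the true eigenvalue condition.
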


%%%%%%%%%%%%%%%%%%%%%%%%%%%%%%%%%%%%%%%%%%%%%%%%%%%%%%%%%%%%%%%%%%%%%%%%%%%%%%%%
%\newpage

\section{Numerical tests}
We now present some numerical tests, in order to illustrate the asymptotic behavior of the discussed models, both in the case of \textit{Tow-against-one} and \textit{Ochrombel simplified} microscopic interactions. Moreover, we shall consider the case of the continuous opinion dynamics.

\subsection{Numerical scheme}
In order to integrate numerically Eq. \eqref{eq:Allen_Cahn} we consider its equivalent formulation
\[
\partial_t p(c)=\int_{\R_+^2}\omega(c,c_\ast,c_{\ast\ast})g(c_\ast)g(c_{\ast\ast}) \left[ (1-p(c))p(c_\ast)p(c_{\ast\ast})-p(c)(1-p(c_\ast))(1-p(c_{\ast\ast}))\right]\, dc_\ast \, dc_{\ast\ast}
\]
where we have dropped the dependence of $p$ on $t$.
Let us consider a time discretization $\lbrace t^n \rbrace_n$ where $dt:=t^{n+1}-t^n$ is the time step. Hence, denoting in a semi-continuous notation $p^n(c)$ an approximation of $p(c,t^n)$, we can consider the following explicit-implicit scheme
\[
\dfrac{p^{n+1}(c)-p^n(c)}{dt}=(1-p^{n+1}(c))A^n - p^{n+1}(c)B^n
\]
where 
\[
A^n:=\int_{\R_+^2}\omega(c,c_\ast,c_{\ast\ast})g(c_\ast)g(c_{\ast\ast})p^n(c_\ast)p^n(c_{\ast\ast})\, dc_\ast \, dc_{\ast\ast}
\]
and
\[
B^n:=\int_{\R_+^2}\omega(c,c_\ast,c_{\ast\ast})g(c_\ast)g(c_{\ast\ast}) (1-p^n(c_\ast))(1-p^n(c_{\ast\ast}))\, dc_\ast \, dc_{\ast\ast}.
\]
These means that $$p^{n + 1}(c) = \frac{p^n(c) + dt\,A^n}{1 + dt(A^n + B^n)}.$$ One can proceed likewise with equation \eqref{och_eq}.

\subsection{Two-against one model}
In this subsection, we illustrate the asymptotic trend of the Allen-Cahn equation~\eqref{eq:Allen_Cahn} for $\tilde{p}$. In Fig.~\ref{fig.1}, we consider two different pairs of connectivity distributions $g$ and interaction kernels $\omega$. In both cases we consider a constant initial $\tilde{p}(c,0) = $ and, specifically, we selected different $\tilde{p}(c,0)$ in a neighbourhood of $0.5$, that is the coexistence unstable asymptotic equilibrium. Then, in $(a)$: $g(c) = \frac c2\mathbbm 1\{0 \leq c \leq 2\}$ and $\omega(c, c_\ast, c_{\ast\ast}) = cc_\ast c_{\ast\ast}$, while in $(b)$ $g(c) = 0.5\,\mathbbm 1\{0 \leq c \leq 2\}$ and $\omega(c, c_\ast, c_{\ast\ast}) = \mathbb 1\{|c_\ast - c_{\ast\ast}| \leq 1\}$. In both cases we remark that Assumption~\ref{ass.2} holds as $g$ has compact support. In both cases, we remark that the unstable constant $\tilde{p}_2=\frac{1}{2}$ is the stationary state only if it is the initial condition, i.e. $\tilde{p}(c,0)=\frac{1}{2}$. Else, we remark that the stationary state is either $+1$ or $-1$ according to the value of the initial constant condition. In Fig.~\ref{fig:2}, the same choice of the connectivity distribution $g(c) = 0.1e^{-0.1c}$ and interaction kernel $\omega(c, c_\ast, c_{\ast\ast}) = c\mathbb 1\{|c_\ast - c_{\ast\ast}|$ is made. Then, in $(a)$ the initial datum is a non rescaled truncated Gaussian distribution centered in $1.5$ with unitary variance, whereas in $(b)$ the same initial distribution is normalised. Again, we remark that convergence is towards one of the two polarized configurations.
\begin{figure} 
    \centering
    \begin{subfigure}[]%{0.49\textwidth}
        \centering
        \includegraphics[width=0.49\textwidth]{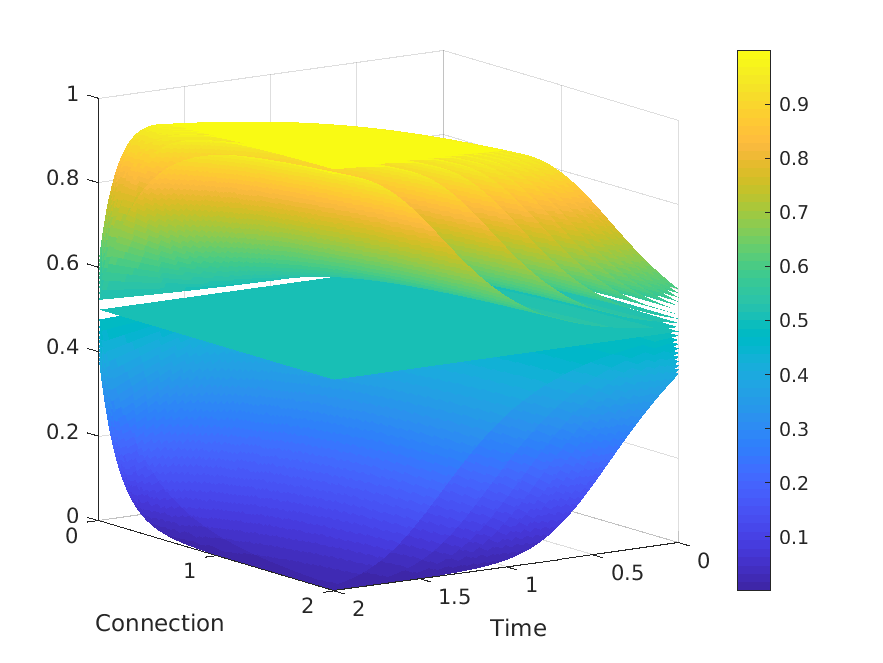}
    \end{subfigure}
    \begin{subfigure}[]%{0.49\textwidth}
        \centering
        \includegraphics[width=0.49\textwidth]{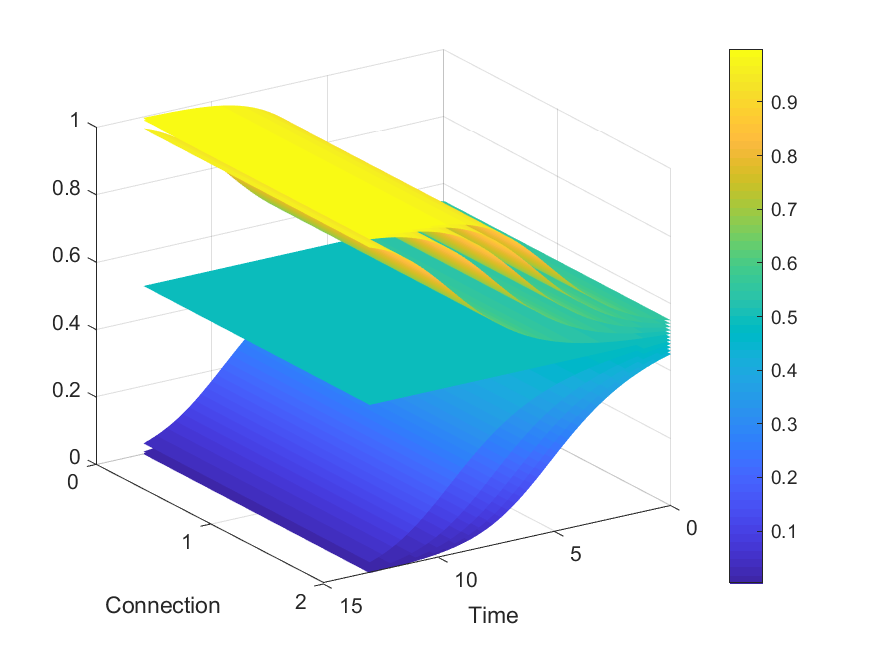}
    \end{subfigure}
    \caption{Time evolution of the numerical simulations of the solution $\tilde{p}(c,t)$ to the model given by \eqref{eq:Allen_Cahn} with $\tilde{p}(c,0) = $constant. In both cases, we selected different $\tilde{p}(c,0)$ in a neighbourhood of $0.5$. In $(a)$: $g(c) = \frac c2\mathbbm 1\{0 \leq c \leq 2\}$ and $\omega(c, c_\ast, c_{\ast\ast}) = cc_\ast c_{\ast\ast}$, while in $(b)$ $g(c) = 0.5\,\mathbbm 1\{0 \leq c \leq 2\}$ and $\omega(c, c_\ast, c_{\ast\ast}) = \mathbb 1\{|c_\ast - c_{\ast\ast}| \leq 1\}$. }
    \label{fig.1}
\end{figure}
\begin{figure} 
    \centering
    \begin{subfigure}[]%{0.49\textwidth}
        \centering
        \includegraphics[width=0.49\textwidth]{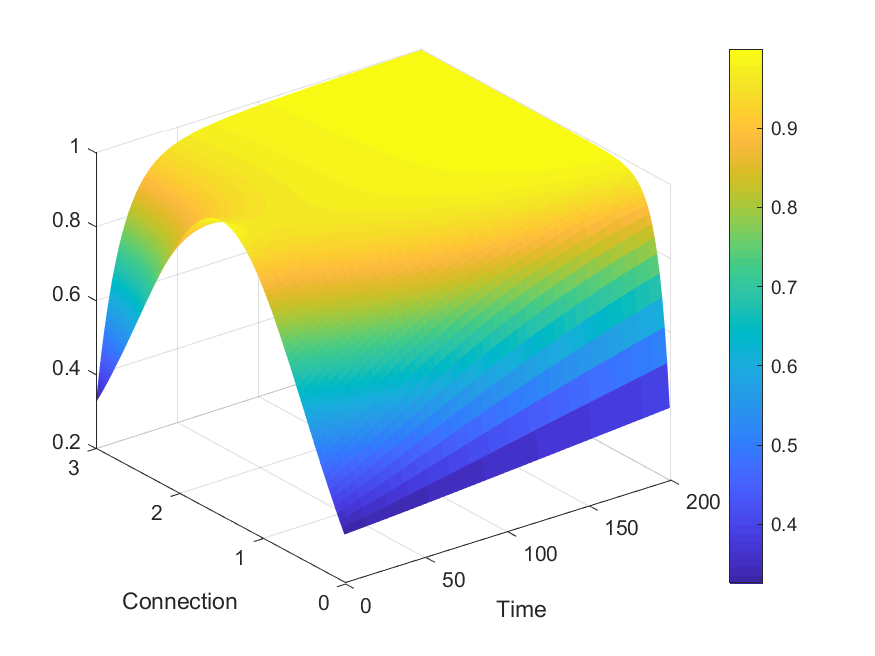}
    \end{subfigure}
    \begin{subfigure}[]%{0.49\textwidth}
        \centering
        \includegraphics[width=0.49\textwidth]{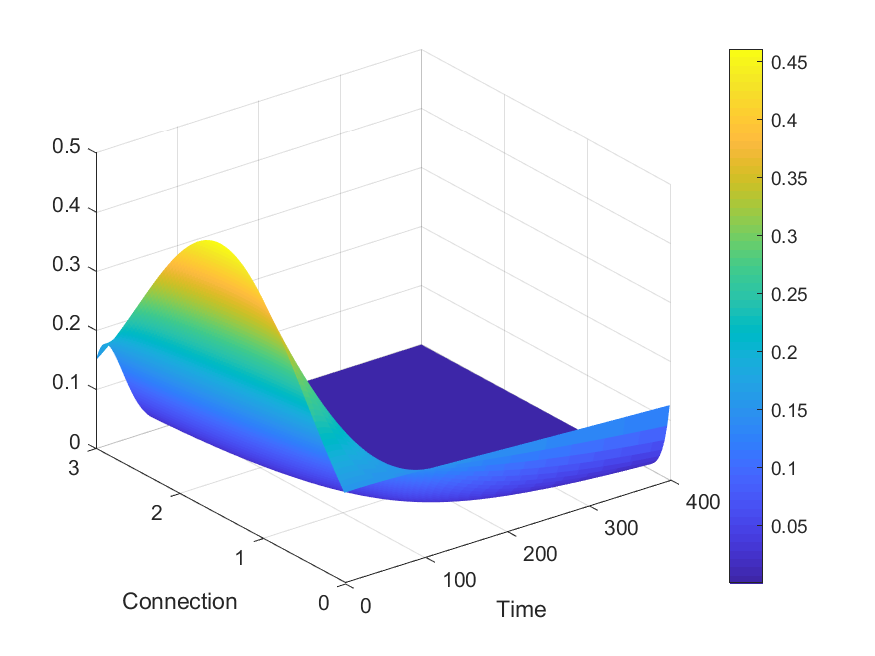}
    \end{subfigure}
    \caption{Time evolution of the numerical simulations of the solution $\tilde{p}(c,t)$ to the model given by \eqref{eq:Allen_Cahn} with, $g(c) = 0.1e^{-0.1c}$ and $\omega(c, c_\ast, c_{\ast\ast}) = c\mathbb 1\{|c_\ast - c_{\ast\ast}| \leq 1\}$. In the first case the initial datum is a non rescaled truncated Gaussian distribution centered in 1.5 with unitary variance, whereas in the second figure the same distribution is normalised.}
    \label{fig:2}
\end{figure}

\subsection{Ochrombel simplified model}
In this subsection we consider the evolution equation~\eqref{lin_eq}, that corresponds to the Ochrombel simplified model~\eqref{och_eq}. The results are reported in Figure~\ref{fig:3}. As distribution of the connectivity, we consider $g(c) = e^{-c}$, while the interaction kernel is proportional to the interactions of both agents: $B(c, c_\ast) = cc_\ast$. We remark that for the properties of $g$, Assumption~\ref{ass.3} holds. Two different initial data are considered: in $(a)$ $\tilde{p}(c,0) = e^{- c}$, while in $(b)$ $\tilde{p}(c,0)$ is a truncated (on the positive axis) Gaussian centered in $2$ with unitary variance in the second one, i.e. $\tilde{p}(c,0) =  e^{-(c - 2)^2}$. In both cases the asymptotic value~\eqref{rho} is recovered: while in $(a)$ $\rho = \int_{\R_+} e^{-2c}\,dc = \frac 12$, in $(b)$ we have that $\rho = \frac 1{\int_{\R_+}e^{-\frac 12(c - 2)^2}\,dc}\int_{\R_+}e^{-\frac 12(c - 2)^2 - c}\,dc \simeq 0.19$.
\begin{figure} 
    \centering
    \begin{subfigure}[]%{0.49\textwidth}
       \centering
       \includegraphics[width=0.49\textwidth]{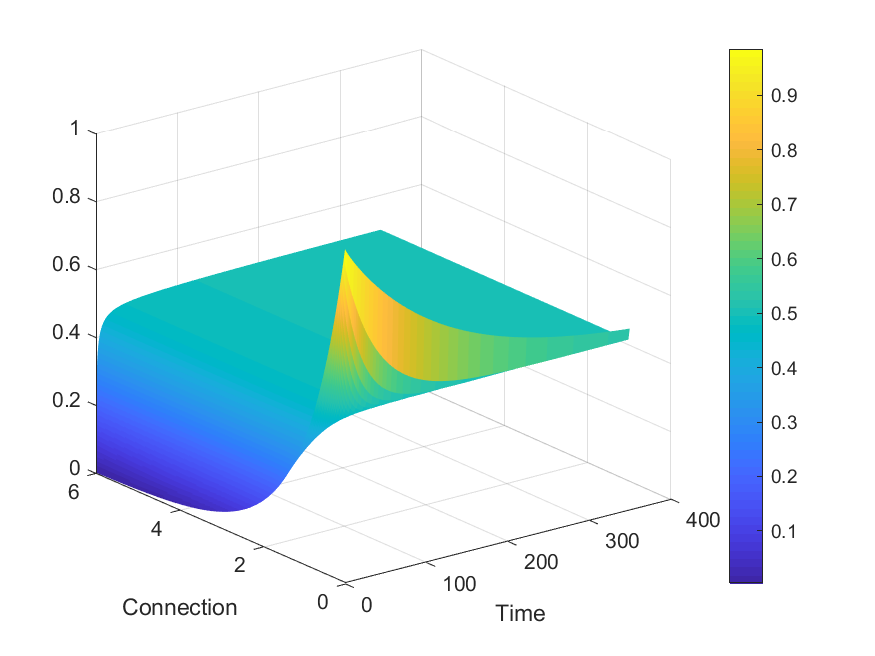}
    \end{subfigure}
    \begin{subfigure}[]%{0.49\textwidth}
        \centering
        \includegraphics[width=0.49\textwidth]{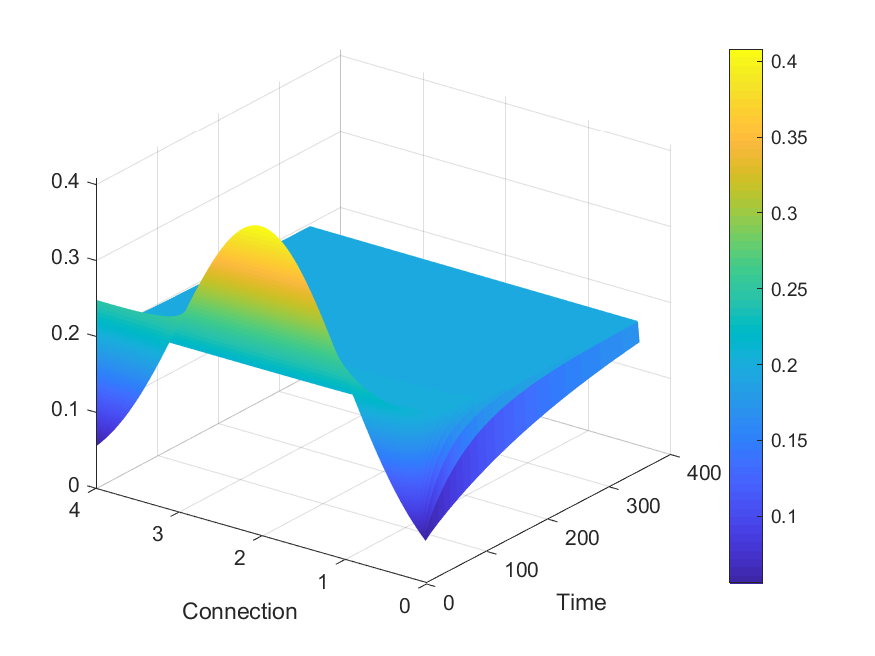}
    \end{subfigure}
    \caption{Time evolution of the numerical simulations of the solution $\tilde{p}(c,t)$ to the model given by~\eqref{lin_eq} with $g(c) = e^{-c}$, $B(c, c_\ast) = cc_\ast$, $\tilde{p}(c,0) = e^{- c}$ in the first figure and $\tilde{p}(c,0)$ truncated (on the positive axis) Gaussian centered in $2$ with unitary variance in the second one. One can recover the respective asymptotic values $\rho = \int_{\R_+} e^{-2c}\,dc = \frac 12$ and $\rho = \frac 1{\int_{\R_+}e^{-\frac 12(x - 2)^2}\,dc}\int_{\R_+}e^{-\frac 12(c - 2)^2 - c}\,dc \simeq 0.19$.}
    \label{fig:3}
\end{figure}

\subsection{Continuous opinion model}
In this subsection, we present the numerical solution of~\eqref{cont_op} and its stationary solution. The results are shown in Figure~\ref{fig:4}. Here $B = 1$, $K(c, c_\ast) = \frac{c_\ast}{c + c_\ast}$ initial datum $m_w(c, 0) = e^{-c}$ as in~\cite{toscani2018PRE}. The choice of $K$ and $B$ corresponds to a left eigenfunction $m^{BK}=c$. In $(a)$ the distribution of the connectivity is  $g(c) = e^{-c}$, that has thin tails, while in $(b)$ a fat-tailed distribution is chosen: $g(c) = \frac{e^{-\frac{1}{c}}}{c^2}$ (the same as in~\cite{toscani2018PRE}). We show that we recover the respective asymptotic values $\rho_w$ defined by~\eqref{rho_cont}, that are $\rho_w=\int_{\R_+} c e^{-2c} \, dc=0.25$ in $(a)$, and $\rho_w=\int_{\R_+} \dfrac{e^{-\frac{1 + c^2}{c}}}{c^2} \, dc \simeq 0.28$ in $(b)$.
\begin{figure} 
    \centering
    \begin{subfigure}[]%{0.49\textwidth}
       \centering
       \includegraphics[width=0.49\textwidth]{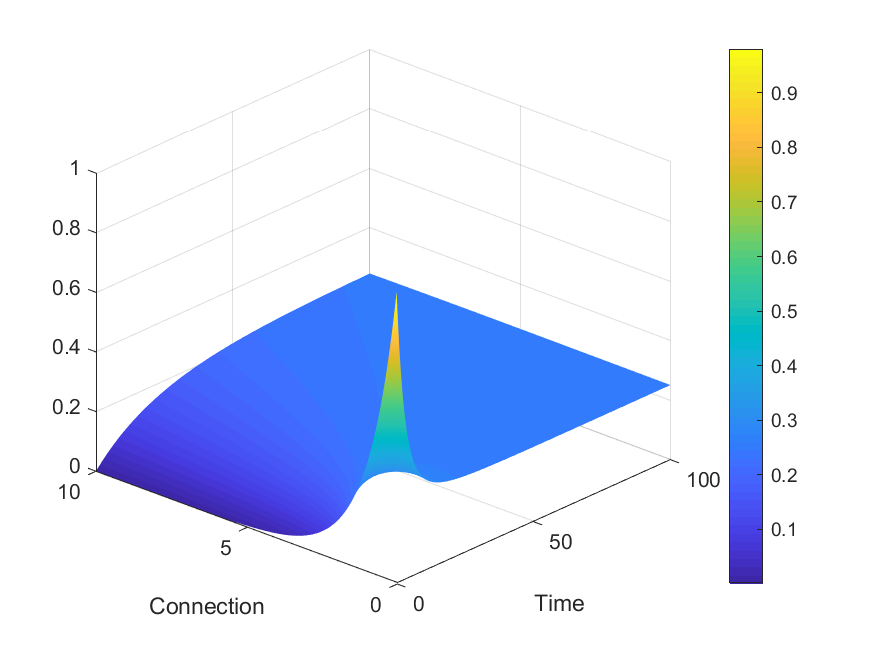}
    \end{subfigure}
    \begin{subfigure}[]%{0.49\textwidth}
        \centering
        \includegraphics[width=0.49\textwidth]{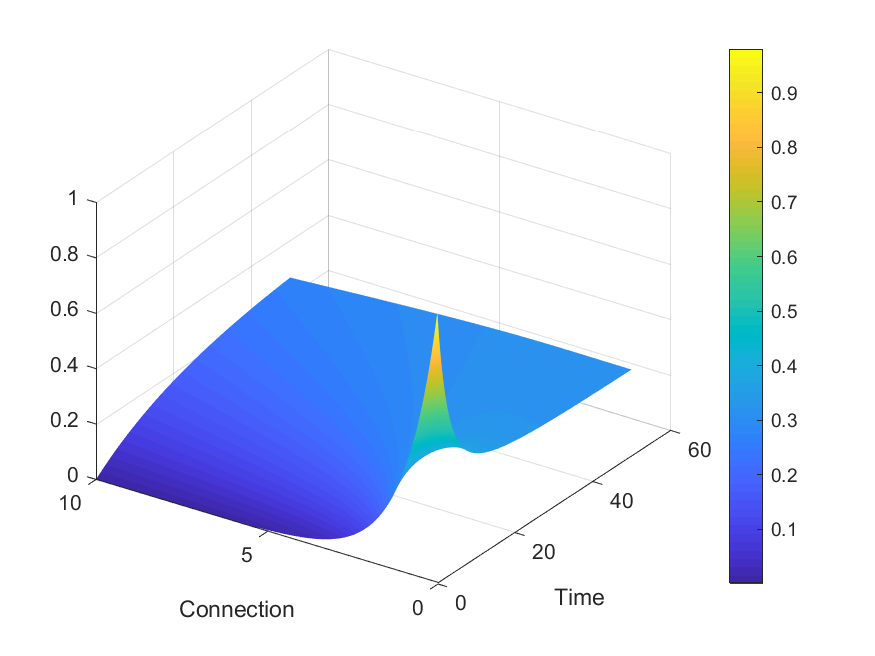}
    \end{subfigure}
    \caption{Time evolution of the numerical simulations of the solution $m_w(c,t)$ to equation \eqref{cont_op} with $B = 1$, $K(c, c_\ast) = \frac{c_\ast}{c + c_\ast}$, initial datum $m_w(c, 0) = e^{-c}$. In $(a)$  $g(c) = e^{-c}$, while in $(b)$ $g(c) = \frac{e^{-\frac{1}{c}}}{c^2}$. One can recover the respective asymptotic values $\rho_w=\int_{\R_+} c e^{-2c} \, dc=0.25$ and $\rho_w=\int_{\R_+} \dfrac{e^{-\frac{1 + c^2}{c}}}{c^2} \, dc \simeq 0.28$.}
    \label{fig:4}
\end{figure}

\section{Conclusion}
In this paper, we have studied the asymptotic stationary equilibria of Boltzmann-type equations for opinion formation on a social network. The microscopic dynamics prescribes for each agent a degree of connectivity and a binary value of the opinion evolving according to majority-like rules. Specifically, we have considered a Boltzmann-type kinetic model for the Two-against-one model that involves ternary interactions. In this case we have derived an Allen-Cahn equation for the time-evolution of the fraction of individuals with a given connectivity holding one of the two opposite opinions. The linear stability analysis of the Allen-Cahn equation is performed by solving an eigenfunction-eigenvalue problem related to an integral operator defined by the interaction kernel on a suitable finite measure space defined by the distribution of the connectivity. The linear stability analysis of the Allen-Cahn equation has allowed us to conclude that the polarised configurations are always stable, while the coexistence configuration is always unstable. This is true for any choice of the interaction kernel that gives the rate at which individuals interact according to their degree of connectivity.

We have then considered the Ochrombel-simplified model, that relies on binary interactions. In this case, the evolution equation for the fraction of individuals with a given connectivity holding one of the two opinions, is a linear scattering equation. Also in this case, the study of the stationary equilibria amounts to the analysis of an eigenvalue-eigenfunction problem for a suitable integral operator related to the interaction kernel and set on a finite measure space defined by the distribution of the connectivity. In this case, the possible stationary solutions correspond to an average opinion that may be any value in $[-1,1]$, even though the possible opinion states are $\pm 1$. In this case, then, coexistence is possible for any interaction kernel. Moreover, thanks to General Relative Entropy results, the exact asymptotic stationary stable equilibrium could be determined.

We have then considered considered a model for opinion formation on a social network when the microscopic opinion value is a continuous one that ranges in $[-1,1]$ as done in~\cite{toscani2018PRE}. The application of the same methodology as in the Ochrombel case has allowed us to determine the value of the asymptotic stationary stable average opinion for a fixed degree of connectivity, by only considering regularity assumptions on the interaction kernels and not on the dependence of the average opinion on the degree of connectivity. Specifically, this methodology allows to determine the asymptotic average opinion also in the case where it is not constant with respect to $c$ and without the need of determining it from the conditioned kinetic distribution of the opinion $h_c(w)$.

Eventually, we have applied the methodology to the case of a co-evolving network, by considering both the case of the Two-against-one model and the Ochrombel simplified model.
In all cases, the regularity assumptions on the interaction kernels on the finite measure space defined by the distribution of the connectivity are minimal ($L^\infty$) and depend on the interplay between the interaction kernels ($\omega, B, K$) and $g$.

The interest of the present manuscript relies on the possibility of determining rigorously the stationary equilibria and their stability of an opinion formation problem that is formerly described by a kinetic Boltzmann--type model, which can be derived from appropriate microscopic stochastic processes that can model accurately the interactions of the agents on a social network, which, in the present case, is described statistically.

Possible future works include the study of the continuous opinion model on co-evolving networks and the study of the equilibria of an opinion formation process (both in the case of a discrete and a continuous opinion) on a finite graph, a model in the same spirit as the one proposed in~\cite{loy2021MBE,loy2021KRM} in the case of the diffusion of a virus.

\section*{Acknowledgements}
MB acknowledges support from DESY (Hamburg, Germany),
a member of the Helmholtz Association HGF.
NL is member of INdAM-GNFM. NL acknowledges support from GNFM (Gruppo Nazionale per la Fisica Matematica) of IN-
dAM (Istituto Nazionale di Alta Matematica) through a “Progetto Giovani 2020” grant. This work was also
partially supported by the Italian Ministry of University and Research (MUR) through the “Dipartimenti di Ec-
cellenza” Programme (2018-2022), Department of Mathematical Sciences “G. L. Lagrange”, Politecnico di Torino
(CUP: E11G18000350001). This work is also part of the activities of the PRIN 2020 project (No. 2020JLWP23)
“Integrated Mathematical Approaches to Socio–Epidemiological Dynamics”.

\section{Appendix}\label{appendix}
\begin{proof}
It is straightforward to verify that if $h(c,t)$ is a solution of~\eqref{lin_eq}, then
\begin{multline}\label{lemma.entropy}
   \partial_t \left( m(c) n(c) H\left(\dfrac{h(c,t)}{n(c)} \right)\right)\\
+\int_{\R_+}B(c,c_\ast)\left[m(c_\ast) n(c) H\left(\dfrac{h(c,t)}{n(c)}\right)-m(c)n(c_\ast)H\left(\dfrac{h(c_\ast,t)}{n(c_\ast)}\right)\right] \, g(c_\ast) dc_\ast \\
   = \int_{\R_+}B(c,c_\ast) m(c) n(c_\ast) \left[ H\left(\dfrac{h(c,t)}{n(c)}\right)-H\left(\dfrac{h(c_\ast,t)}{n(c_\ast)}\right)+H'\left(\dfrac{h(c_\ast,t)}{n(c_\ast)}\right)\left(\dfrac{h(c_\ast,t)}{n(c_\ast)}-\dfrac{h(c,t)}{n(c)} \right)\right] \, g(c_\ast) dc_\ast
\end{multline}
and, after integration on $g(c)dc$, ~\eqref{lemma.gen.entr} holds.

Choosing $H(u)=u$ (that is convex) allows to see that
\[
\dfrac{d}{dt}\int_{\R_+} m(c) h(c,t)  \, g(c) \, dc=0,
\]
and therefore~\eqref{lemma.cons} holds.

Then, we choose $H(u)=u^2$ and we work on $h(c,t)=p(c,t)-\rho n(c)$. We remark that $\int_{\R_+} m(c) h(c) \, g(c) dc=0$. 
We have that
\[
\dfrac{d}{dt}\int_{\R_+} m(c) n(c) H\left(\dfrac{p(c,t)}{n(c)}\right) \, g(c) \, dc=-\int_{\R_+^2} B(c_\ast,c) m(c) n(c) \left[\dfrac{h(c)}{n(c)}-\dfrac{h(c_\ast)}{n(c_\ast)} \right]^2 \, g(c_\ast) dc_\ast g(c) dc
\]
It is then possible to prove a Poincaré inequality
\[
\nu_3 \int_{\R_+} m(c) n(c) \left(\dfrac{h}{n}\right)^2 g(c) dc \le \int_{\R_+^2} B(c_\ast,c) m(c) n(c) \left[\dfrac{h(c)}{n(c)}-\dfrac{h(c_\ast)}{n(c_\ast)} \right]^2 \, g(c_\ast) dc_\ast g(c) dc
\]
that holds for any $\psi>0$ as long as it is possible to define
\begin{equation}\label{nu}
\nu_1=\int_{\R_+} n(c) m^2(c)/\psi(c) \, g(c) dc <\infty \qquad \nu_2 \psi(c_\ast) n(c) \le B(c,c_\ast), \qquad \nu_3=(\nu_1 \nu_2)^{-1}. 
\end{equation}
In fact, whenever $\int_{\R_+} m(c) h(c,t) g(c) \, dc=0$, then
\begin{multline*}
\int_{\R_+} m(c) n(c) \left(\dfrac{h}{n}\right)^2 g(c) dc \le\\
\nu_1 \int_{\R_+}\int_{\R_+} m(c) n(c) \left[\dfrac{h(c,t)}{n(c)}-\dfrac{h(c_\ast,t)}{n(c_\ast)}\right]^2 \psi(c_\ast) n(c_\ast) g(c_\ast) \, dc_\ast g(c) \, dc\\
\le \nu_1 \nu_2\int_{\R_+}\int_{\R_+} m(c) n(c_\ast) B(c,c_\ast) \left[\dfrac{h(c,t)}{n(c)}-\dfrac{h(c_\ast,t)}{n(c_\ast)}\right]^2 g(c) g(c_\ast) \,  dc_\ast dc
\end{multline*}
where the first inequality follows from the Holder inequality on the finite measure space $L^2(\R_+,\mu), L^1(\R_+,\mu)$ and from the first in~\eqref{nu}, while the second inequality follows from the second in~\eqref{nu}.
Specifically, we can choose
$$
\psi(c_\ast)=\displaystyle\min_c B(c,c_\ast), \quad \nu_2 \in (0,1),
$$
remembering that $B>0$.
Then, 
\[
\dfrac{d}{dt}\int_{\R_+} m(c) n(c) H\left(\dfrac{p(c,t)}{n(c)}\right) \, g(c) \, dc \le -\nu_3 \int_{\R_+} m(c) n(c) \left(\dfrac{h}{n}\right)^2 g(c) dc
\]
and~\eqref{eq:gen.entr} follows by a simple use of Gronwall lemma.
\end{proof}

\bibliographystyle{plain}
\bibliography{LnRmTa-phase_transition}
\end{document}